\RequirePackage{fix-cm}

\documentclass[twocolumn]{svjour3}        
\smartqed  
\usepackage{amsmath,amsfonts,amssymb,epsfig,epstopdf,url,array}

\usepackage{amsthm}
\usepackage{color,soul}
\usepackage{algorithmic}
\usepackage{balance}
\usepackage[ruled]{algorithm2e}
\usepackage{stmaryrd}
\usepackage[table]{xcolor}
\usepackage{centernot}
\usepackage{tikz}
\usepackage{booktabs}
\usepackage{pdfpages}
\usepackage{graphicx}
\usepackage{subfigure}
\usepackage[font=small,skip=8pt]{caption}
\usepackage{tabularx}
\usepackage{comment}
\usepackage{multirow}
\usepackage{bigstrut}
\usepackage{pdflscape}
\usepackage{rotating}
\usepackage{float}
\usepackage{lipsum}
\usepackage{enumitem}
\usetikzlibrary{plotmarks,shapes,arrows,chains,hobby,backgrounds,calc,trees}
\usepackage{makecell}
\usepackage{bm}
\usepackage{afterpage}
\usepackage{supertabular}

\usepackage[section=subsection,sort=use]{glossaries}
\usepackage[toc,page]{appendix}
\usepackage[T1]{fontenc}

\newglossarystyle{formel_altlong4colheader}{%
	\setglossarystyle{altlong4colheader}%
}
\newglossarystyle{abbreviationStyle}{%
	\setglossarystyle{long3col}%
	\renewenvironment{theglossary}%
	{\begin{supertabular}[l]{@{}p{0.4\hsize}p{0.9\hsize}p{0.01\hsize}@{}}}%
		{\end{supertabular}}%
}
\newglossarystyle{notationStyle}{%
	\setglossarystyle{long3col}%
	{\begin{supertabular}[l]{@{}p{0.4\hsize}p{0.9\hsize}p{0.01\hsize}@{}}}%
		{\end{supertabular}}%
}

\newglossary{abbreviation}{1i}{1o}{}
\newglossary{notation}{2i}{2o}{}
\loadglsentries{glossary}
\makeglossaries

\newcommand{\head}[1]{\textbf{#1}}
\definecolor{lightgray}{gray}{0.9}
\definecolor{white}{rgb}{1,1,1}
\theoremstyle{definition}
\newtheorem{exmp}{Example}
\theoremstyle{definition}
\newtheorem{mydef}{Proposition}
\theoremstyle{definition}
\newtheorem{defn}{Definition}
\newcommand*{\permcomb}[4][0mu]{{{}^{#3}\mkern#1#2_{#4}}}
\newcommand*{\perm}[1][-3mu]{\permcomb[#1]{P}}


\begin{document}

\title{Modeling and Selection of Interdependent Software Requirements using Fuzzy Graphs}

\author{Davoud Mougouei, David M. W. Powers}


\institute{School of Computer Science, Engineering and Mathematics\\
	Flinders University,
	Adelaide, Australia \\
	Tel.: +61 8 82012841\\
	\email{\{davoud.mougouei,david.powers\}@flinders.edu.au}           
}
\date{This is a post-peer-review, pre-copyedit version of an article published in International Journal of Fuzzy Systems. The final authenticated version is available online at: \url{http://dx.doi.org/10.1007/s40815-017-0364-4}.}

\maketitle

\begin{abstract}
Software requirement selection is to find an optimal set of requirements that gives the highest value for a release of software while keeping the cost within the budget. However, value-related dependencies among software requirements may impact the value of an optimal set. Moreover, value-related dependencies can be of varying strengths. Hence, it is important to consider both the existence and the strengths of value-related dependencies during a requirement selection. The existing selection models however, either assume that software requirements are independent or they ignore strengths of requirement dependencies. This paper presents a cost-value optimization model that considers the impacts of value-related requirement dependencies on the value of selected requirements (optimal set). We have exploited algebraic structure of fuzzy graphs for modeling value-related requirement dependencies and their strengths. Validity and practicality of the work are verified through carrying out several simulations and studying a real world software project.

\end{abstract}
\section{Introduction}
\label{sec_introduction}
Owing to budget constraints, it is hardly if ever feasible to fully implement the requirements of a software release~\cite{bagnall_next_2001}. Therefore, requirement selection is needed to find an optimal set of requirements with the highest value while keeping the cost within the budget~\cite{mougouei2019fuzzy,mougouei2015partial,dahlstedt_moulding_2003,barney_product_2008,ruhe_product_2010,achimugu_systematic_2014,kukreja_value_based_2013,Fogelstr_Usingportfolio_2010}. This problem, also known as the \textit{Next Release Problem (\gls{NRP})}~\cite{bagnall_next_2001}, is mathematically formulated as a \textit{Binary Knapsack Problem (\gls{BKP})}~\cite{mougouei2017integer,harman_exact_2014,szoke_conceptual_2011,zhang_multi_objective_2007,carlshamre_release_2002,lust_multiobjective_2012,karlsson_optimizing_1997,jung_optimizing_1998}. 

Based on the BKP formula, the existing selection models aim to maximize the \textit{Accumulated Value (\gls{AV})} of an optimal set on the assumption that the value of an optimal is derived by accumulating the \textit{Estimated Values (\gls{EV}s)} of selected software requirement~\cite{harman_exact_2014}. However, several studies have argued that this assumption does not hold when interdependencies exist among requirements~\cite{carlshamre_release_2002,van_den_akker_flexible_2005,harman_exact_2014,mougouei2016factoring}. The reason is that software requirements affect each other's values due to the value-related dependencies among them~\cite{mougouei2020dependency,mougouei2019dependency,mougouei2018mathematical,mougouei2017dependency,mougouei2017modeling,dahlstedt_requirements_2005,carlshamre_release_2002,karlsson_improved_1997}. As a result, excluded requirements may impact the values of selected requirements that depend on them. Hence the actual value i.e. \textit{Customer Value (\gls{CV})} of a selected requirement might differ from its estimated value (EV).    


On the other hand, value-related requirement dependencies can be of various strengths in the context of real world projects~\cite{dahlstedt_moulding_2003,Robinson_RIM_2003,ramesh_toward_2001}. In other words, values of requirements can weakly, moderately, or strongly depend on each other~\cite{wang_simulation_2012}. Therefore, it is important to consider both the existence and the strengths of value-related requirement dependencies~\cite{brasil_multiobjective_2012,harman_exact_2014} while considering the impacts of requirements on each other's values during a requirement selection.


However, the existing requirement selection models either assume that requirements are independent~\cite{karlsson_optimizing_1997,jung_optimizing_1998,zhang_multi_objective_2007,finkelstein2009search,del2010ant} or they formulate requirement dependencies as precedence constraints of BKP formula without considering the strengths of those dependencies~\cite{bagnall_next_2001,li2016value,veerapen2015integer,greer_software_2004,ruhe_quantitative_2003,van2011quantitative,zhang2010search,saliu2007bi,van2005determination,ngo2009optimized,chen2013ant,xuan2012solving,van2008software}. In the latter case, dependencies are considered as binary (0/1) relations. 

As such, by excluding a requirement from an optimal set, all of its dependent requirements also have to be excluded even if the budget allows for their implementation~\cite{li_integrated_2010}. This problem is referred to as \textit{Selection Deficiency Problem (\gls{SDP})}~\cite{mougouei2016factoring}. As a result of the SDP, any small increase in the number of dependencies would dramatically depreciate the accumulated value of an optimal set~\cite{li_integrated_2010}. Hence, the SDP can severely impact the efficiency of selection models that formulate all requirement dependencies as precedence relations. 

This paper has focused on considering the impacts of value-related requirement dependencies on the value of an optimal set during a selection process. We have achieved this through considering both the existence and the strengths of value-related requirement dependencies during software requirement selection. In doing so, we have specially made three main contributions. 

First, we have demonstrated using algebraic structure of fuzzy graphs~\cite{kalampakas_fuzzy_2013,zimmermann_fuzzy_1996,rosenfeld_fuzzygraph_1975,mordeson_fuzzy_2001} to modeling value-related requirement dependencies and their strengths.

Second, we have proposed a new measure of value of an optimal set referred to as \textit{Overall Value (\gls{OV})} that considers the impacts of value-related requirement dependencies on the value of an optimal set (selected requirements). In this regard, we have introduced the following definitions of value.

\begin{itemize}
\item \textit{Estimated Value (EV):} Value of a software requirement estimated by the stakeholders.
\item \textit{Customer Value (CV):} Value of a software requirement derived by considering the impacts of value-related dependencies on the estimated value of that requirement.
\item \textit{Accumulated Value (AV):} a measure of value of an optimal set that is derived by accumulating the estimated values of selected requirements without considering value-related dependencies.
\item \textit{Overall Value (OV):} a measure of value of an optimal set that is derived by accumulating the customer values of selected requirements with the consideration of the impacts of value-related requirement dependencies on the value of the selected requirements.
\end{itemize}

Finally, we have presented a novel requirement selection model referred to as the \textit{\gls{GORS}} (Graph Oriented Requirement Selection), which maximizes the overall value of an optimal set while mitigating the SDP. The proposed model not only considers value-related dependencies among requirements but more importantly explicitly factors in the strengths of those dependencies. 

Validity and practicality of the work are verified through a) carrying out several simulations and b) studying a real world software project. The results of our simulations as well as a case study of a real world software project have consistently shown that: (a) the GORS model can properly capture the strengths of value-related dependencies among requirements during a selection process while mitigating the selection deficiency problem (SDP), (b) the GORS model always maximizes the overall value of an optimal set, and (c) maximizing the overall and the accumulated values of an optimal set can be conflicting objectives~\cite{korte_combinatorial_2006} as maximizing one may depreciate the other.

The remainder of this paper is organized as follows. We first discuss related works in Section~\ref{sec_related}. Then, Section~\ref{sec_modleing} gives the details of modeling value-related requirement dependencies by fuzzy graphs. After that, Section~\ref{sec_factoring} introduces our proposed formulation of overall value of an optimal set as well as our proposed graph oriented requirement selection (GORS) model. The results of our simulations and a case study of a real world software project are discussed in Section~\ref{sec_validation}. Section~\ref{sec_identification} then, discusses automated identification of value-related dependencies among requirements. Finally, Section~\ref{sec_conclusion} concludes the paper with a summary of major contributions and future work.    
\section{Related Work}
\label{sec_related}

As discussed earlier, it is important to consider the impact of value-related dependencies during a requirement selection process~\cite{carlshamre_industrial_2001,carlshamre_release_2002,dahlstedt_requirements_2005}. On this basis, the existing requirement selection models can be categorized into two distinct groups. 

The first group of selection models~\cite{karlsson_optimizing_1997,jung_optimizing_1998,zhang_multi_objective_2007,finkelstein2009search,del2010ant} referred to as \textit{BKP models} are solely based on the basic formulation of binary knapsack problem~\cite{harman_exact_2014,zhang_multi_objective_2007,carlshamre_release_2002,lust_multiobjective_2012}
as given in (\ref{Eq_BKP}). 

Given a set of requirements \gls{R} $=\{r_1,...,r_n\}$, for each \gls{r_i} $\in R$, \gls{v_i} and \gls{c_i} in (\ref{Eq_BKP}) denote the estimated value and the cost of $r_i$ respectively. Also, \gls{b} denotes the available budget of the release. A binary variable $x_i$ specifies whether the requirement $r_i$ is selected (\gls{x_i} $=1$) or otherwise ($x_i=0$). As given by (\ref{Eq_BKP}), BKP models find a subset of $R$ that maximizes the accumulated value of selected requirements $(\sum_{i=1}^{n} v_i  x_i)$ without considering dependencies among them~\cite{ruhe_trade_2003,karlsson_optimizing_1997,jung_optimizing_1998,zhang_multi_objective_2007}.

\begin{equation}
\label{Eq_BKP}
\begin{aligned}
      \text{Maximize } & \sum_{i=1}^{n} v_i x_i   \\
      \text{Subject to} & \sum_{i=1}^{n} c_i x_i \leq b \\
			& x_i \in \{0,1\}
\end{aligned}
\end{equation}

To consider dependencies, the second group of selection models~\cite{bagnall_next_2001,li2016value,veerapen2015integer,greer_software_2004,ruhe_quantitative_2003,van2011quantitative,zhang2010search,saliu2007bi,van2005determination,ngo2009optimized,chen2013ant,xuan2012solving,van2008software} referred to as \textit{BKP-PC models} formulate dependencies as precedence constraints of the BKP formula as given by (\ref{Eq_BKP-PC}). For instance, consider the set of requirements $R=\{r_1,r_2,r_3,r_4\}$ for which we have the dependency set of \gls{D} $=\{(r_1,r_2),(r_2,r_3)\}$. 

The explicit dependency $(r_1,r_2)$ $\in D$ (when formulated as a precedence relation) means that requirement $r_2$ cannot be included in the optimal set unless $r_3$ is also selected. Equation (\ref{Eq_BKP-PC}) formulates these dependencies as precedence constraints set $PCS = \{x_2 \leq x_3, x_1 \leq x_2\}$. Moreover, the explicit dependencies $(r_1,r_2)$ and $(r_2,r_3)$ can also imply an implicit dependency from $r_1$ to $r_3$. 

\begin{equation}
\label{Eq_BKP-PC}
     \begin{aligned}
      \text{Maximize } & \sum_{i=1}^{n} v_i x_i   \\
      \text{Subject to} &\sum_{i=1}^{n} c_i  x_i \leq b \\
			& x_i \in \{0,1\}\\
	& x_i \leq x_j,\text{ if } r_i \text{ depends on } r_j 
     \end{aligned}
\end{equation}

However, BKP-PC models merely captures the existence of dependencies while ignoring their strengths. As such, all dependencies are treated as binary (0/1) relations and consequently a requirement cannot be selected even in the presence of sufficient budget unless all of its dependent requirements are also selected. This makes BKP-PC models prone to the selection deficiency problem (SDP)~\cite{mougouei2016factoring} as discussed earlier.  

As a result of the SDP, any increase in the number of dependencies would dramatically depreciate the accumulated value of selected requirements~\cite{li_integrated_2010}. Therefore the SDP can severely impact the efficiency of BKP-PC models. In one study, Chen et al.~\cite{li_integrated_2010} demonstrated that a $2\%$ increase in the number of dependencies (when formulated as precedence relations) would lead to almost $10\%$ decrease in the accumulated value of the optimal set. 

The SDP occurs if the condition of (\ref{Eq_SDP}) holds. The dependency set $D$ in (\ref{Eq_SDP}) specifies the explicit dependencies among a set of requirements $R=\{r_1,...,r_n\}$ where $R$ is partitioned into two distinct subsets: an optimal set \gls{O} $\subseteq R $ (selected requirements) and an excluded set \gls{Otilde} $\subseteq R$ (ignored requirements) such that $O \cap \tilde{O} = \emptyset$. 

\begin{align}
\label{Eq_SDP}
\exists\text{ } r_i,r_j\in \tilde{O} : (r_i,r_j) \in D,& (\sum_{r_{k}\in O} c_k) + c_i \leq b \\ \nonumber
& (\sum_{r_{k}\in O} c_k) + c_i  + c_j > b 
\end{align}

Akker et al.~\cite{van_den_akker_flexible_2005} took a different approach toward requirement selection by first partitioning requirements into a set of $m \geq 1$ distinct collections $S=\{s_1,...,s_m\}$ and then assigning a value to each individual collection. As given by (\ref{Eq_BKP-CS}), for each realized collection $s_j\in S$, the difference between value of collection ($w_j$) and its corresponding accumulated value $\sum_{r_k\in s_j}v_k$ is added to the accumulated value of selected requirements ($\sum_{i=1}^{n} v_i$) to derive the value of the optimal set. $y_j$ in (\ref{Eq_BKP-CS}) specifies whether collection $s_j$ is fully realized ($y_j=1$) or otherwise ($y_j=0$). However, this model does not specify how to find the values of collections. Furthermore, estimating the values of the collections can become computationally as complex as $O(2^n)$ for $n$ requirements. 

Some of the existing works~\cite{li_integrated_2010,sagrado_multi_objective_2013,Zhang_RIM_2013} have addressed the latter problem by only estimating the values of collections of size $2$ using pairwise comparisons. Pairwise comparisons however, cannot capture implicit value dependencies among requirements. Furthermore, estimating the values of collections (pairs) still remains subjective. 


\begin{equation}
\label{Eq_BKP-CS}
\begin{aligned}
      \text{Maximize } &\sum_{i=1}^{n} v_i x_i + \sum_{j=1}^{m}(w_j-\sum_{r_k\in s_j}v_k)  y_j\\
      \text{Subject to} &\sum_{i=1}^{n} c_i  x_i \leq b \\
			       & x_i,y_j \in \{0,1\} \\
\end{aligned}
\end{equation}

\section{Modeling Value-related Requirement Dependencies using Fuzzy Graphs}
\label{sec_modleing}
This section highlights our reasons for choosing fuzzy graphs and then gives the details of employing fuzzy graphs for modeling value-related requirement dependencies.
\subsection{Why Fuzzy Graphs?}
\label{sec_fuzzygrpahs_why}

Requirement dependencies in general and value-related requirement dependencies in particular are fuzzy relations~\cite{carlshamre_industrial_2001} in the sense that strengths of those dependencies vary from large to insignificant in the context of real world software projects~\cite{carlshamre_industrial_2001,AN_fuzzyDependency_2005,wang_simulation_2012}.


Fuzzy graphs on the other hand, have demonstrated to contribute to more accurate models in computer science and engineering~\cite{Mathew_strong_2013,mordeson_applications_2000} by considering uncertainty~\cite{mougouei2013fuzzy,mougouei2012measuring,mougouei2012goal} in real world problems~\cite{zadeh_fuzzysets_1965,Klir_1987_Uncertainty}. 

Hence, fuzzy graphs can properly capture the fuzziness associated with value-related dependencies among software requirements and contribute to more accurate models for software requirement selection.

It is also worth mentioning that value-related requirement dependencies are directed relations. In other words, $(r_i,r_j) \not \Rightarrow (r_j,r_i)$. That is customer value of a requirement $r_i$ depends on a requirement $r_j$, does not imply that the customer value of $r_j$ also depends on $r_i$. As such, we adopt directed fuzzy graphs~\cite{zimmermann_fuzzy_1996} for modeling value-related requirement dependencies.

\subsection{Fuzzy Requirement Interdependency Graphs}
\label{sec_fuzzygrpahs_building}

Based on the definition of fuzzy graphs~\cite{rosenfeld_fuzzygraph_1975}, a \textit{Fuzzy Requirement Interdependency Graph (\gls{FRIG})} is defined as a directed fuzzy graph \gls{G} $=(R,D,$ \gls{mu}$,$ \gls{rho}$)$ in which a non-empty set of identified requirements $R=\{r_1,...,r_n\}$ constitute the graph nodes and a set of (explicit) dependencies among the requirements $D=R\times R$ form the edges of the graph. 

A dependency $(r_i,r_j) \in D$ means that the customer value of $r_i$ explicitly depends on selection of $r_j$. The membership function $\rho: R\times R\rightarrow [0,1]$ denotes the strengths of explicit value-related dependencies (membership degrees of edges) in $D$. $\rho(x,y)=0$ denotes the absence of an explicit dependency from $x$ to $y$.

The fuzzy membership function $\mu$ specifies the membership degree of requirements in $R$. Requirements of a software are either identified and listed in its requirement set $R$ or they are unidentified. Hence, we have $\forall r_i \in R : \mu(r_i)=1$. Therefore, $G=(R,D,\mu,\rho)$ can be abbreviated as $G=(R,D,\rho)$.

On the other hand, for $G=(R,D,\mu,\rho)$ to be a fuzzy graph, the following condition must hold at all times. $\forall(x,y) \in D : \rho(x,y)\leq \mu(x)\wedge\mu(y)$, where \gls{wedge} denotes fuzzy AND operator (taking infimum). In other words, for $\rho(x,y)$ to denote a fuzzy relation, the membership degree of a relation, also referred to as the strength of the relation (dependence) must not exceed the membership degree of either of the two elements. Theorem~(\ref{Theorem_1}) shows that a FRIG always satisfies the condition of fuzzy graphs.  
\begin{mydef}
\label{Theorem_1}
If $G=(R,D,\rho)$ is a FRIG, ($\forall r_i \in R : \mu(r_i)=1$) then $G$ always satisfies the condition $\forall(r_i,r_j) \in D : \rho(r_i,r_j)\leq \mu(r_i)\wedge\mu(r_j)$.
\end{mydef}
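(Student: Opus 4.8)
The plan is to verify the target inequality directly from the definitions, exploiting the fact that the hypothesis $\forall r_i \in R : \mu(r_i)=1$ collapses the right-hand side of the inequality to a constant. First I would fix an arbitrary edge $(r_i,r_j)\in D$ — the claim is universally quantified over edges, so the argument must begin with a generic edge and use nothing about it beyond $(r_i,r_j)\in D$. Then I would evaluate $\mu(r_i)\wedge\mu(r_j)$: since $\mu(r_i)=\mu(r_j)=1$ by hypothesis and $\wedge$ is the fuzzy AND operator (the infimum), this equals $1\wedge 1 = 1$.

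Next I would invoke the codomain of the edge-membership function. By definition $\rho:R\times R\rightarrow[0,1]$, so $\rho(r_i,r_j)\leq 1$ for every pair of requirements. Combining the two observations yields $\rho(r_i,r_j)\leq 1 = \mu(r_i)\wedge\mu(r_j)$, and since $(r_i,r_j)$ was an arbitrary element of $D$, the condition $\forall(r_i,r_j)\in D:\rho(r_i,r_j)\leq\mu(r_i)\wedge\mu(r_j)$ holds, which is exactly the defining condition (in the sense of Rosenfeld) for $G=(R,D,\mu,\rho)$ to be a directed fuzzy graph.

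There is no genuine obstacle here; the only point requiring care is making the quantifier structure explicit so the reader sees the inequality is checked edge by edge. The statement is essentially a consistency check confirming that the modeling choice $\mu\equiv 1$ — justified earlier by the remark that a requirement is either listed in $R$ or unidentified — is compatible with the fuzzy-graph axioms, so that writing $G=(R,D,\rho)$ in place of $G=(R,D,\mu,\rho)$ is legitimate. I would keep the proof to two or three lines, and optionally remark that the argument in fact shows the fuzzy-graph condition is automatic for \emph{any} choice of $\rho$ whenever the node-membership function is identically $1$.
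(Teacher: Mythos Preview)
Your proposal is correct and follows essentially the same two-step argument as the paper's own proof: first use $\mu\equiv 1$ to reduce $\mu(r_i)\wedge\mu(r_j)$ to $1$, then use the codomain $[0,1]$ of $\rho$ to bound $\rho(r_i,r_j)\leq 1$. The only difference is presentational---you make the quantifier structure and the fixing of an arbitrary edge more explicit, whereas the paper writes the same reasoning as two implication chains labeled (a) and (b).
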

\begin{proof}
$G=(R,D,\rho)$ is a FRIG $\Rightarrow$ \\
\vspace{-.2cm}
\addtolength{\itemsep}{-0.4\baselineskip}
	\item $a)\text{ } \forall r_i \in R : \mu(r_i)=1 \Rightarrow \forall(r_i,r_j) \in D : \mu(r_i)\wedge\mu(r_j) = infimum(\mu(r_i),\mu(r_j)) = 1$,\\
	\item $b) \text{ } \rho: R \times R \rightarrow [0,1] \Rightarrow \forall r_i,r_j \in R, \rho (r_i,r_j) \leq 1$.\\
Therefore, $\forall(r_i,r_j) \in D : \rho(r_i,r_j)\leq \mu(r_i)\wedge\mu(r_j)$.
\end{proof}
\begin{exmp}
\label{ex_fuzzy_relation}
Let $E_1=(R,D,\rho)$ be a FRIG with $R=\{r_1,r_2,r_3,r_4\}$, $D=\{(r_1,r_2), $$(r_2,r_3)$$, $$(r_3,r_4)$$,(r_4,r_2)\}$ as in Figure~\ref{fig_ex_fuzzy_relation}. The membership function $\rho$ specifies the strengths of explicit dependencies in $D$ as $\rho(r_1,r_2)$$=0.6$, $\rho(r_2,r_3)=0.4,$ $\rho(r_3,r_4)=0.8,$ $\rho(r_4,r_2)=0.2$. The dependency $(r_1,r_2)$ specifies that customer value of $r_1$ explicitly depends on $r_2$ and $\rho(r_1,r_2)$ gives the strength of the dependency ($0.6$). 
\end{exmp}
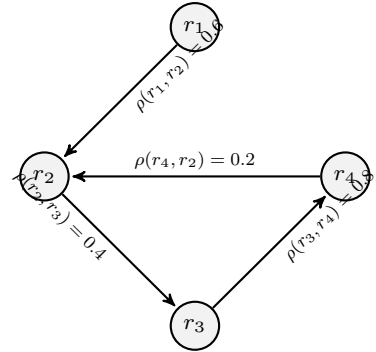
\begin{figure}[h!]
\centering
\begin{tikzpicture}[->,>=stealth',shorten >=1pt,auto,node distance=2.8cm,
  thick,main node/.style={circle,fill=gray!10,draw}]
  \node[main node] (1) {$r_1$};
  \node[main node] (2) [below left of=1] {$r_2$};
  \node[main node] (3) [below right of=2] {$r_3$};
  \node[main node] (4) [below right of=1] {$r_4$};

  \path[every node/.style={font=\sffamily\small,scale=.8}]
    (1) edge node [auto=left,sloped,pos=0.5] {$\rho(r_1,r_2)=0.6$} (2)
    (2) edge node [auto=right,sloped,pos=0.5] {$\rho(r_2,r_3)=0.4$} (3)
    (3) edge node [auto=right,sloped,pos=0.5] {$\rho(r_3,r_4)=0.8$} (4)
    (4) edge node [auto=right,pos=0.5] {$\rho(r_4,r_2)=0.2$} (2);
\end{tikzpicture}
\caption{FRIG of Example~\ref{ex_fuzzy_relation}}
\label{fig_ex_fuzzy_relation}
\end{figure}

Value-related dependencies in a FRIG can be either explicit or implicit. Explicit dependencies are identified by the edges of the graph whereas implicit dependencies are inferred from explicit dependencies. For instance, an implicit dependency $(r_1,r_2,r_3)$ from $r_1$ to $r_3$ in Figure~\ref{fig_ex_fuzzy_relation} is inferred from explicit dependencies $(r_1,r_2)$ and $(r_2,r_3)$. 

\begin{defn}
\label{fuzzygraph_def_4}
\textit{Value-related Dependencies, and their Strengths}. Let $P=\{p_1,p_2,..., p_m\}$ be the set of all value-related dependencies from node $r_0$ to node $r_n$ in a FRIG $G=(R,D,\rho)$. A value-related dependency $p_i \in P$ is defined as a sequence of distinct nodes ($r_0,...,r_n$) such that $\rho(r_{i-1},r_i) > 0$, where $1 \leq i \leq n$ and $n\geq 1$ is the length of the dependency. 

The strength of a value-related dependency $p_i\in P$ is derived by (\ref{Eq_fuzzygraph_strength_path}) that is the strength of the $p_i$ equals to the strength of the weakest explicit value-related dependency (edge) in $p_i$.
\end{defn}
\begin{eqnarray}
\label{Eq_fuzzygraph_strength_path}
\forall p_i=(r_0,...,r_n) \in P,\phantom{s}\rho(p_i) = \displaystyle\bigwedge_{j=1}^{n}\text{ }\rho(r_{j-1},r_j)
\end{eqnarray}
\vspace{0.2em}

In a FRIG $G=(R,D,\rho)$ with $m$ dependencies from a requirement $r_0$ to $r_n$, the overall strength of all dependencies from $r_0$ to $r_n$ is denoted as $\rho^{\infty}(r_0,r_n)$ and calculated by (\ref{Eq_fuzzygraph_strength_relation}). Based on (\ref{Eq_fuzzygraph_strength_relation}), the overall strength of all dependencies from $r_0$ to $r_n$ equals to the strength of the strongest dependency among all the $m$ dependencies from $r_0$ to $r_n$. It is clear that $\rho^{\infty}(r_0,r_n)=\rho(r_0,r_n)$ when there is no implicit value-related dependency from $r_0$ to $r_n$. 


\begin{align}
\label{Eq_fuzzygraph_strength_relation}
\phantom{ssss}\rho^{\infty}(r_0,r_n) = \displaystyle \bigvee_{i=1}^{m}\rho(p_i)
\end{align}

To measure the level of value-related dependencies in a FRIG $G=(R,D,\rho)$ with $n$ requirements ($|R|=n$) and $k$ explicit value-related dependencies among those requirements ($|D|=k$), we define \textit{Level Of Interdependency (\gls{LOI})} as given in (\ref{Eq_loi}). 

\begin{align}
\label{Eq_loi}
\phantom{sss}LOI(G)=\frac{k}{\perm{n}{2}}, \perm{n}{2}=\frac{n!}{(n-2)!}
\end{align}

\begin{exmp}
\label{ex_fuzzy_loi}
For the FRIG $E_1$ in Example~\ref{ex_fuzzy_relation}, with $n=4$, $k=4$ we have $LOI(E_1)=\displaystyle \frac{4}{\perm{4}{2}} = \frac{4}{12} =0.33$.
\end{exmp}

It is also worth mentioning that requirements of software projects may negatively influence the customer values of each other. For instance, a negative dependency from a requirement $r_i$ to a requirement $r_j$ means that the customer value of $r_i$ will be depreciated if $r_j$ is selected with $r_i$ in an optimal set. Such negative dependency nonetheless, can be modeled as a positive dependency from $r_i$ to $\bar{r_j}$ where $\bar{r_j}$ denotes ignoring $r_j$ (excluding $r_j$ from an optimal set). Hence, FRIGs can capture both positive and negative value-related dependencies. Nevertheless, in this paper we only focus on positive dependencies for the sake of simplicity.     

\section{Optimizing the Overall Value of an Optimal Set using the GORS Model}
\label{sec_factoring}

This section gives the details of considering value-related requirement dependencies in calculating the overall value of an optimal set. We then present our proposed graph oriented requirement selection model referred to as the GORS model that optimizes the overall value an optimal.   
\subsection{Overall Value of an Optimal Set}

During a selection process some of the requirements of a software may be excluded from the optimal set. Due to the value-related dependencies among requirements however, excluded requirements may impact the values of selected requirements that depend on them. Equation (\ref{Eq_impact}) captures these impacts. For a FRIG $G=(R,D,\rho)$, $O=\{o_1,...,o_m\}$ and $\tilde{O}=\{\tilde{o}_1 ,...,\tilde{o}_k\}$ denote selected and excluded requirements respectively such that $O \subseteq R , \tilde{O} \subseteq R : O \cap \tilde{O} = \emptyset, O \cup \tilde{O} = R$. 

For $\forall o_i \in O$, \gls{I_i} indicates the impact of excluded requirements on $o_i$'s value. This impact is calculated by taking supremum (fuzzy OR operator \gls{vee}) over the strengths of all dependencies from $o_i$ to the excluded requirements in $\tilde{O}$. For an excluded requirement $r_j \in \tilde{O}$ then, the overall strength of all dependencies from $o_i$ to $\tilde{o_j}$ is denoted by $\rho^{\infty}(o_i,\tilde{o_j})$ which specifies the extent to which $o_i$'s value relies on selection of $\tilde{o_j}$ (through all dependencies from $o_i$ to $o_j$) as derived by (\ref{Eq_fuzzygraph_strength_relation}). 

\begin{eqnarray}
\label{Eq_impact}
&\hspace{1em}\forall r_i \in R: I_{i}= 
	 \begin{cases}
			\displaystyle\bigvee  ^k_{j=1}(\rho^{\infty}(r_i,\tilde{o}_j)) & \text{if $r_i \in O$} \\
			0 & \text{if $r_i \notin O$}
	\end{cases}
\end{eqnarray}   

As discussed earlier, accumulated value (AV) of an optimal set $O$ is derived by accumulating the estimated values of selected requirements ($\sum_{r_i \in O} v_i$). The overall value (OV) of $O$ in contrast, is derived by accumulating the customer values of selected requirements as computed by (\ref{Eq_value}). Overall value of $O$ ($\sum_{r_i \in O} CV_i$) captures the impacts of value-related dependencies as customer value of each requirement $r_i \in O$ ($CV_i$) captures the impacts of value-related dependencies on the value of $r_i$ as given by~(\ref{Eq_cv}).

\begin{align}
\label{Eq_cv}
\hspace{1em}CV_i = v_i\times(1-I_{i})
\end{align}

\vspace{-1em}
\begin{equation}
\label{Eq_value}
\begin{aligned}
\hspace{1em}OV = \sum_{r_i \in O} CV_i = \sum_{r_i \in O} v_i\times(1-I_{i})
\end{aligned}
\end{equation}

\subsection{The \textit{GORS} Selection Model}
\label{sec_factoring_gors}
In contrast to BKP and BKP-PC models that aim to maximize the accumulated value of an optimal set, the proposed GORS model, maximizes the overall value of an optimal set by factoring in the impacts of excluded requirements on the values of selected requirements as given by (\ref{Eq_GORS}). 

\begin{equation}
\label{Eq_GORS}
     \begin{aligned}
      \text{Maximize } & \sum_{i=1}^{n} v_i x_i (1-I_{i})\\
      \text{Subject to} & \sum_{i=1}^{n} c_i x_i < b \\
       & x_i = \{0,1\}
     \end{aligned}
\end{equation}

\subsection{Examples of Requirement Selection}
\label{sec_factoring_examples}

This section provides examples of requirement selection using BKP, BKP-PC, and GORS models. 

\begin{exmp}
	\label{ex_selection}
	Let $G_p=(R,D,\rho)$ be a FRIG of a software (Figure~\ref{fig_ex_selection}) with requirement set $R=\{r_1,r_2,r_3,r_4\}$ and explicit value-related dependencies $D$ as in Figure~\ref{fig_ex_selection} with strengths of $\rho(r_1,r_2)= 0.4$, $\rho(r_1,r_3)= 0.8$, $\rho(r_2,r_4)= 0.3$, $\rho(r_3,r_1)= 0.8$, $\rho(r_3,r_2)= 0.6$, $\rho(r_3,r_4)= 0.8$, and $\rho(r_4,r_3)= 0.2$. The costs and values of the requirements are specified by $C=\{c_1=10,c_2=10,c_3=15,c_4=10\}$ and $V=\{v_1=20,v_2=10,v_3=50,v_4=10\}$ respectively. 
\end{exmp}

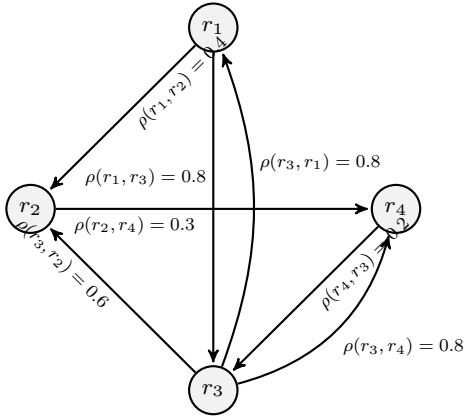
\begin{figure}[h!]
\centering
\begin{tikzpicture}[->,>=stealth',shorten >=1pt,auto,node distance=3.4cm,
  thick,main node/.style={circle,fill=gray!10,draw}]
  \node[main node] (1) {$r_1$};
  \node[main node] (2) [below left of=1] {$r_2$};
  \node[main node] (3) [below right of=2] {$r_3$};
  \node[main node] (4) [below right of=1] {$r_4$};
  \path[every node/.style={font=\sffamily\small,scale=0.8}]
		(3) edge [bend right = 20] node [auto=right,pos=0.6] {$\rho(r_3,r_1)=0.8$} (1)
    (3) edge node [auto=righ,sloped,pos=0.5] {$\rho(r_3,r_2)=0.6$} (2)
		(3) edge [bend right] node [auto=right,pos=0.5] {$\rho(r_3,r_4)=0.8$} (4)
    (1) edge node [auto=left,sloped,pos=0.5] {$\rho(r_1,r_2)=0.4$} (2)
		(1) edge node [auto=right,pos=0.4] {$\rho(r_1,r_3)=0.8$} (3)
    (2) edge [right] node [auto=right,pos=0.25] {$\rho(r_2,r_4)=0.3$}(4)
    (4) edge node [auto=left,sloped,pos=0.5] {$\rho(r_4,r_3)=0.2$} (3);
\end{tikzpicture}
\caption{FRIG of Example~\ref{ex_selection}}
\label{fig_ex_selection}
\end{figure}
\begin{table}[!h]
\renewcommand{\arraystretch}{1.3}
\caption{Overall strengths of the value-related dependencies among requirements of Example~\ref{ex_selection}}
\label{table_ex_strengths}
\centering
\begin{tabular}{ccccc}
  \toprule[1.5pt]
  \head{$\rho^{\infty}(x,y)$} & \head{$r_1$} & \head{$r_2$} & \head{$r_3$} & \head{$r_4$}\\
  \midrule
  $r_1$ & $1.0$ & $0.6$ & $0.8$ & $0.8$\\
  $r_2$ & $0.2$ & $1.0$ & $0.2$ & $0.3$\\
	$r_3$ & $0.8$ & $0.6$ & $1.0$ & $0.8$\\
  $r_4$ & $0.2$ & $0.2$ & $0.2$ & $1.0$\\
  \bottomrule[1.5pt]
\end{tabular}

\end{table}
\begin{table}[!h]
\renewcommand{\arraystretch}{1.3}
\caption{Accumulated values, overall values, and accumulated costs of the requirement subsets of Example~\ref{ex_selection}}
\label{table_ex_all}
\centering
\resizebox {.36\textwidth }{!}{\begin{tabular}{llll}
  \toprule[1.5pt]
  \head{$Subset$} & \head{$AC$} & \head{$AV$} & \head{$OV$}\\
  \midrule
	$s_0= \{\}$    & $0$  & $0$  & $0$ \\
    $s_1= \{r_1\}$ & $10$ & $20$ & $4$ \\
	$s_2= \{r_2\}$ & $10$ & $10$ & $7$ \\
    $s_3= \{r_3\}$ & $15$ & $50$ & $10$ \\
    $s_4= \{r_4\}$ & $10$ & $10$ & $8$\\
	$s_5=\{r_1,r_2\}$ & $20$ & $30$ & $11$\\
    $s_6=\{r_1,r_3\}$ & $25$ & $70$ & $14$\\
    $s_7=\{r_1,r_4\}$ & $20$ & $30$ & $12$\\
    $s_8=\{r_2,r_3\}$  & $25$ & $60$ & $17$\\
    $s_9=\{r_2,r_4\}$  & $20$ & $20$ & $16$\\
    $s_{10}=\{r_3,r_4\}$ & $25$ & $60$ & $18$\\
    $s_{11}=\{r_1,r_2,r_3\}$  & $35$ & $80$ & $21$\\
	$s_{12}=\{r_1,r_2,r_4\}$  & $30$ & $40$ & $20$\\
    $s_{13}=\{r_1,r_3,r_4\}$  & $35$ & $80$ & $36$\\
    $s_{14}=\{r_2,r_3,r_4\}$  & $35$ & $70$ & $26$\\
    $s_{15}=\{r_1,r_2,r_3,r_4\}$ & $45$ & $90$ & $90$\\
  \bottomrule[1.5pt]
\end{tabular}}

\end{table}

\begin{exmp}
\label{ex_BKP}
Consider finding the optimal set of requirements by the BKP model. Among all subsets of $R$ in Table~\ref{table_ex_all}, the BKP model recommends $s_6=\{r_1,r_3\}$ as the optimal set with the highest accumulated value of $AV=70$ and the accumulated cost of $AC=25$. Therefore we have $O=\{r_1,r_3\}$, $\tilde{O}=\{r_2,r_4\}$. In order to compute the overall value of the optimal set, we first calculate impacts of excluded requirements on the values of selected requirements based on (\ref{Eq_impact}). Impacts are calculated based on overall strengths of value-related dependencies in Table~\ref{table_ex_strengths}. 

Overall strengths of dependencies are calculated by (\ref{Eq_fuzzygraph_strength_relation}) as explained earlier. For instance, to compute the overall strength of the dependency from $r_4$ to $r_2$, dependencies $p_{1}=(r_4,r_3,r_2)$ and $p_2=(r_4,r_3,r_1,r_2)$ need to be considered. Based on (\ref{Eq_fuzzygraph_strength_relation}), supremum is taken over the strengths of the two dependencies to calculate the overall strength of the dependency: $\rho^{\infty}(r_4,r_2)=\vee((\rho(r_4,r_3)\wedge\rho(r_3,r_2)),(\rho(r_4,r_3)\wedge\rho(r_3,r_1)\wedge\rho(r_1,r_2))) = \vee((0.2\wedge0.6),(0.2\wedge0.8\wedge0.4)) = \vee(0.2,0.2) = 0.2$. 

Impacts then can be computed as $I_{1} = \vee(\rho^{\infty}(r_1,r_2)$, $\rho^{\infty}(r_1,r_4)) = 0.8$, $I_{3} = \vee(\rho^{\infty}(r_3,r_2),\rho^{\infty}(r_3,r_4)) = 0.8$. Finally, overall value of the optimal set $O=\{r_1,r_3\}$ is calculated as $OV = v_{1} \times (1-I_{1}) + v_{3} \times (1-I_{3}) = 20 \times 0.2 + 50 \times 0.2 = 14$ which is less than the overall value of $s_{10}$. Therefore, the BKP model does not necessarily maximize the overall value of an optimal set. 
\end{exmp}

\begin{exmp}
\label{ex_BKP_Pre}
Consider finding the optimal set of requirements in Example~\ref{ex_BKP} using the BKP-PC model, which finds a subset of requirements with the highest AV respecting the budget ($AC \leq 25$) and the precedence constraints among the requirements. We can derive a precedence constraints set $PCS=\{x_1 \leq x_2$, $x_1 \leq x_3$, $x_2 \leq x_4$, $x_3 \leq x_1$, $x_3 \leq x_2$, $x_3 \leq x_4$, $x_4 \leq x_3\}$ from the dependency set $D$ of $G_p$. Obviously there is only one case that simultaneously satisfies both $PCS$ and $AC \leq 25$, which is the empty set $s_0=\{\}$ ($x_1=x_2=x_3=x_4 = 0$) in Table~\ref{table_ex_all} with $AV=OV=0$. Hence, none of the requirements can be implemented even though the budget is available for implementing some of them. This is due to the selection deficiency problem (SDP) as discussed before.
\end{exmp}

\begin{exmp}
\label{ex_GORS}
Consider finding the optimal set of requirements in Example~\ref{ex_BKP} using the GORS model, which finds a subset of requirements with the highest OV respecting $AC \leq 25$. To do so, we first calculate the OV of all subsets of $R$ (steps of calculation was demonstrated for $s_6$ in Example~\ref{ex_BKP}) as listed in Table~\ref{table_ex_all}. Among all subsets of the requirements, $s_{10}=\{r_3,r_4\}$ gives the highest overall value of $OV=18$ while the accumulated cost is within the budget, that is $(AC=25 \leq 25)$. Therefore, $s_{10}=\{r_3,r_4\}$ will be selected as the optimal set. $s_{10}$ however, is not giving the maximum accumulated value. $s_6$ for instance, provides a higher AV.
\end{exmp}  
\section{Validation}
\label{sec_validation}

Validity and practicality of our work are verified through carrying out several simulations and studying a real world software project.  
\subsection{Simulations}
\label{sec_simulation}


\subsubsection{Simulation Design}
\label{sec_simulation_design}

We compared performance of the GORS model against those of the BKP and BKP-PC models through carrying out simulations on requirements from two classic requirement sets~\cite{Karlsson_1996,karlsson_optimizing_1997,jung_optimizing_1998} from real world projects of Ericssons \textit{Radio Access Network (\gls{RAN})} and \textit{Performance Management Traffic Recording (\gls{PMR})} with $14$ and $11$ requirements respectively. Estimated values and costs of the requirements of the RAN and PMR projects are listed in Table~\ref{table_data}. 

\begin{table}[!h]
\renewcommand{\arraystretch}{1.3}
\caption{Estimated values and costs of requirements for the RAN and PMR projects}
\label{table_data}
\centering
\resizebox {.41\textwidth }{!}{\begin{tabular}{ccccccc}
  \toprule[1.5pt] 
    Requirements & \multicolumn{2}{c}{RAN} & \multicolumn{2}{c}{PMR}\\
    & \head{Value} & \head{Cost} & \head{Value} & \head{Cost} \\
    \midrule
    $r_1$     &$12$ &$1$  &$0$  &$6$  \\
		$r_2$     &$6$  &$2$  &$6$  &$5$  \\
    $r_3$     &$5$  &$3$  &$3$  &$6$  \\
    $r_4$     &$7$  &$4$  &$11$ &$19$  \\
    $r_5$     &$12$ &$6$  &$32$ &$28$  \\
		$r_6$     &$16$ &$11$ &$20$ &$4$  \\
    $r_7$     &$3$  &$4$  &$9$  &$5$  \\
    $r_8$     &$3$  &$6$  &$4$  &$7$  \\
    $r_9$     &$4$  &$7$  &$25$ &$10$  \\
    $r_{10}$  &$5$  &$12$ &$9$  &$3$  \\
		$r_{11}$  &$1$  &$4$  &$3$  &$8$  \\
    $r_{12}$  &$1$  &$6$  &$-$  &$-$  \\
    $r_{13}$  &$21$ &$23$ &$-$  &$-$  \\
    $r_{14}$  &$3$  &$10$ &$-$  &$-$  \\
    \bottomrule[1.5pt]
\end{tabular}
}

\end{table}

Simulation has been widely used for the purpose of evaluation in system analysis~\cite{Law_1997_SMA} as well as the studies concerning requirement dependencies~\cite{wang_simulation_2012,li_integrated_2010}. Chen et al.~\cite{li_integrated_2010} for instance, proposed simulating requirement dependencies for analyzing the performance of requirement selection models. 

However, to the best of our knowledge there is no work in the existing literature to study distribution of strengths of dependencies among software requirements. Hence, we simulate the strengths of explicit value-related requirement dependencies with uniformly distributed random numbers in $[0,1]$ generated by the \textit{nextDouble()} \textit{Method} of the \textit{Class Random} in Java~\cite{java_random}. Our aim is to allow for evaluating the performance of selection models for various levels of dependencies among requirements. 


Our simulation process starts with construction of a fuzzy requirement interdependency graph (FRIG) with randomly generated strengths of explicit value-related dependencies (edges of the graph) for a given level of interdependency ($LOI \in [0,1]$). A range of budgets ($Budget=\{1,2, ..., 120\}$) then will be specified to examine the performance of the selection models in the presence of various budget constraints. 

At the end of each simulation, an optimal set of requirements will be generated by each of the selection models. Then, the accumulated value and the overall value of each optimal set will be calculated and compared against those of the other selection models. The simulation will be repeated for different levels of interdependency (LOIs) among requirements. 

\subsubsection{Simulation Results}
\label{sec_simulation_result}


Figure~\ref{fig_result_all} shows the results of our simulations. The $x$ and $y$ axes show the available budget ($Budget\in \{1,...,120\}$) and the level of interdependency ($LOI \in \{0,0.1,...,1\}$) respectively. The $z$ axis shows the percentage of the accumulated value (overall value) of the optimal set, which is the ratio of AV (OV) to the total estimated value of the requirements multiplied by $100$. 

Our simulation results consistently showed that the BKP model maximized the accumulated value (AV) while the GORS model maximized the overall value (OV) of optimal sets. Nevertheless, none of these models simultaneously maximized both AV and OV for an optimal set. In other words, maximizing AV and OV demonstrated to be conflicting objectives.

\begin{figure*}
	\begin{center}
		\hspace{0em}	
		\subfigure[Accumulated value of RAN]{%
			\label{fig:first}
			\includegraphics[scale=0.5]{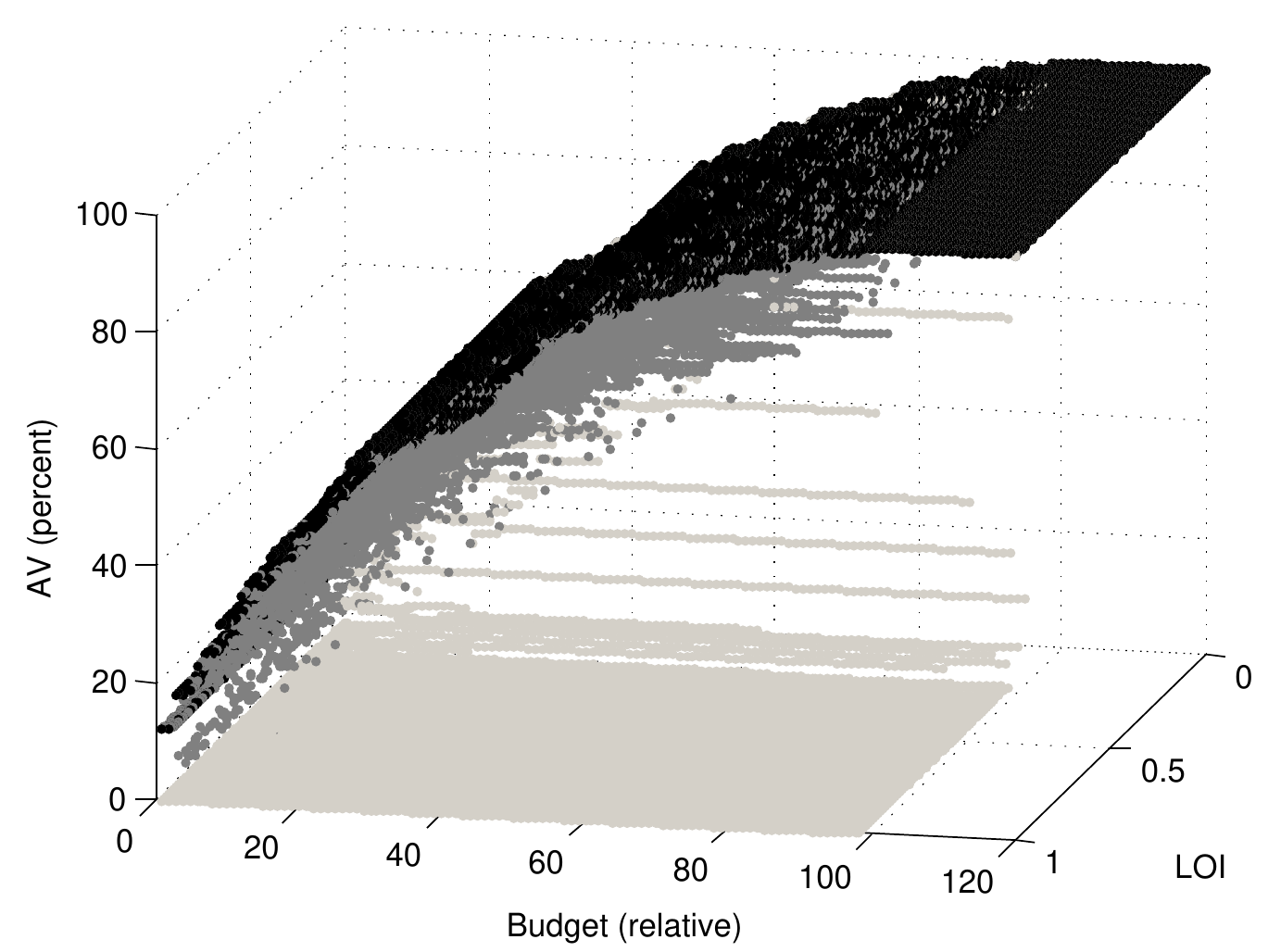}
		}%
		\subfigure[Accumulated value of PMR]{%
			\label{fig:second}
			\includegraphics[scale=0.5]{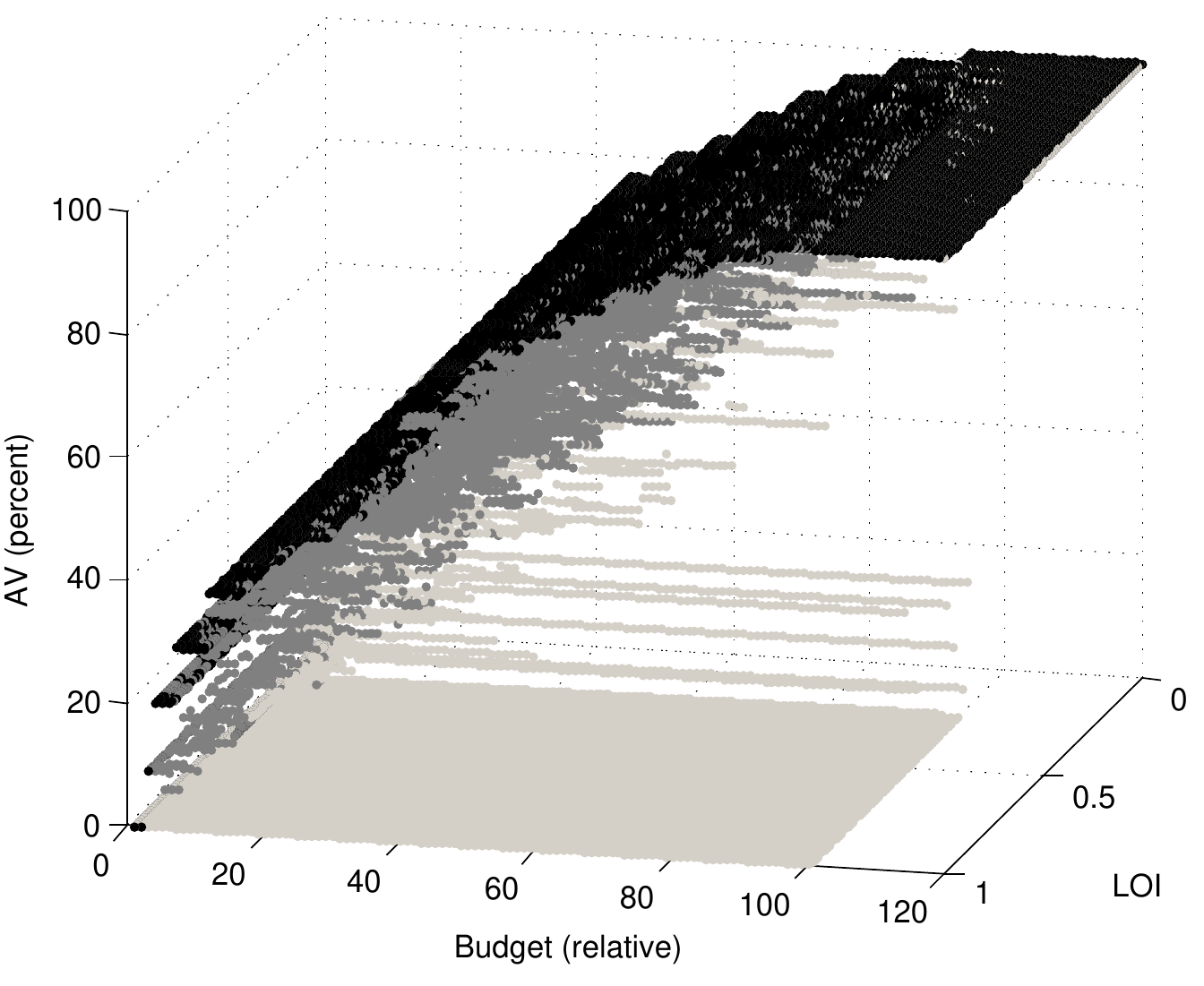}
		}\\ 
		\subfigure[Overall Value of RAN]{%
			\label{fig:second}
			\includegraphics[scale=0.5]{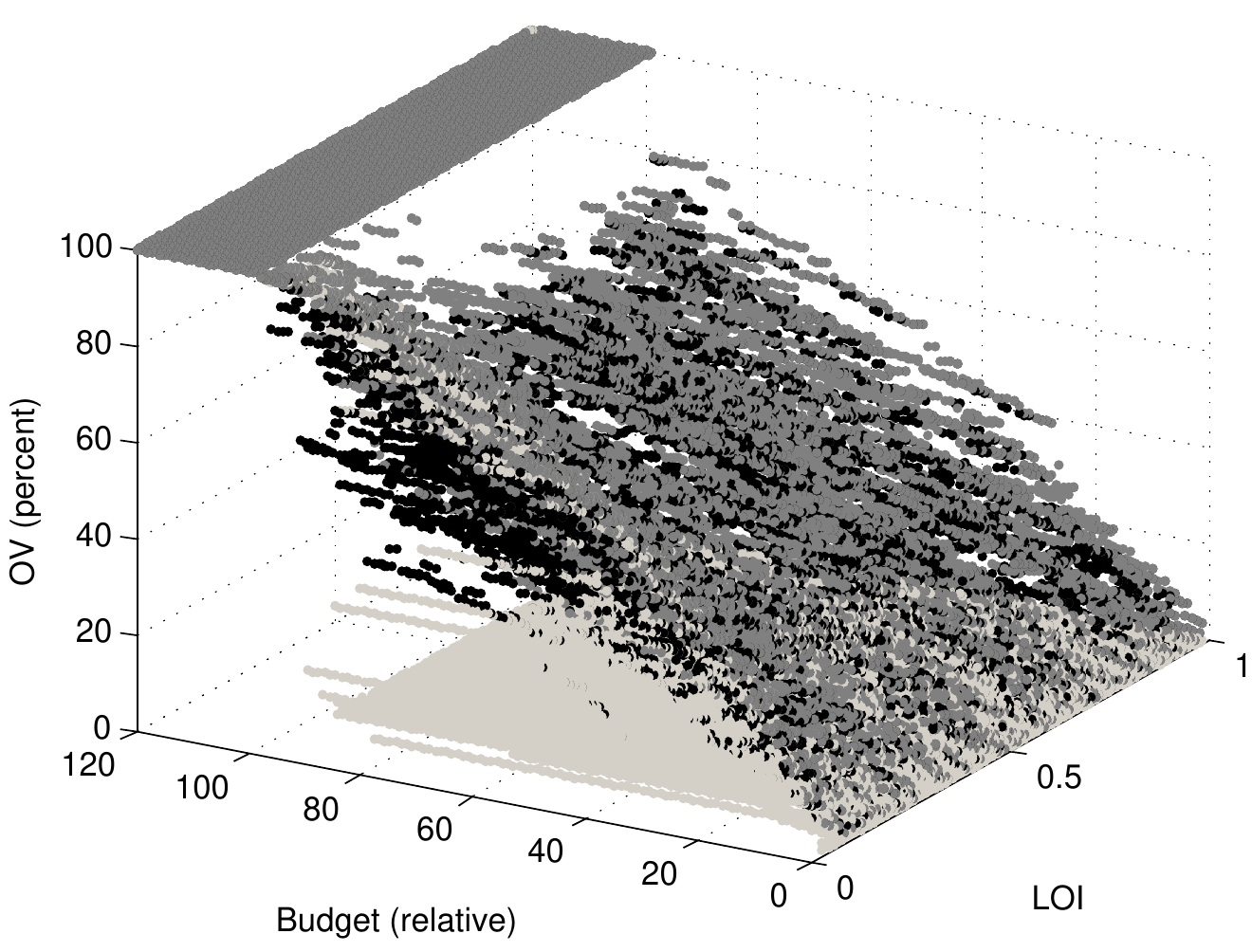}
		}%
		\subfigure[Overall value of PMR]{%
			\label{fig:second}
			\includegraphics[scale=0.5]{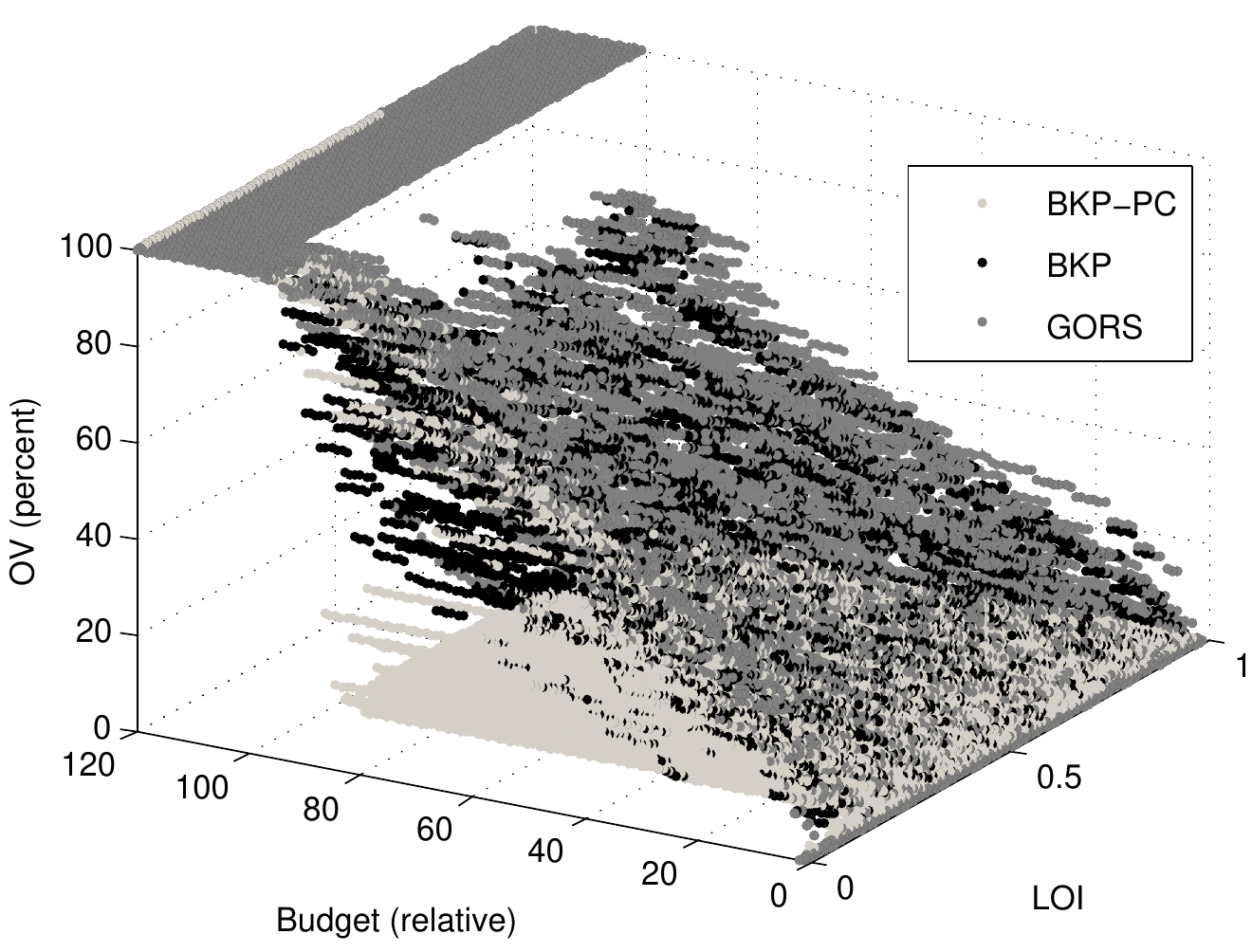}
		}
	\end{center}
	\caption{%
		Accumulated and overall values of RAN and PMR achieved from Simulations 
	}%
	\label{fig_result_all}
\end{figure*}

\begin{figure*}[!htb]
\begin{center}
\subfigure[LOI = $0.8$]{%
\label{fig:first}
\includegraphics[scale=0.5]{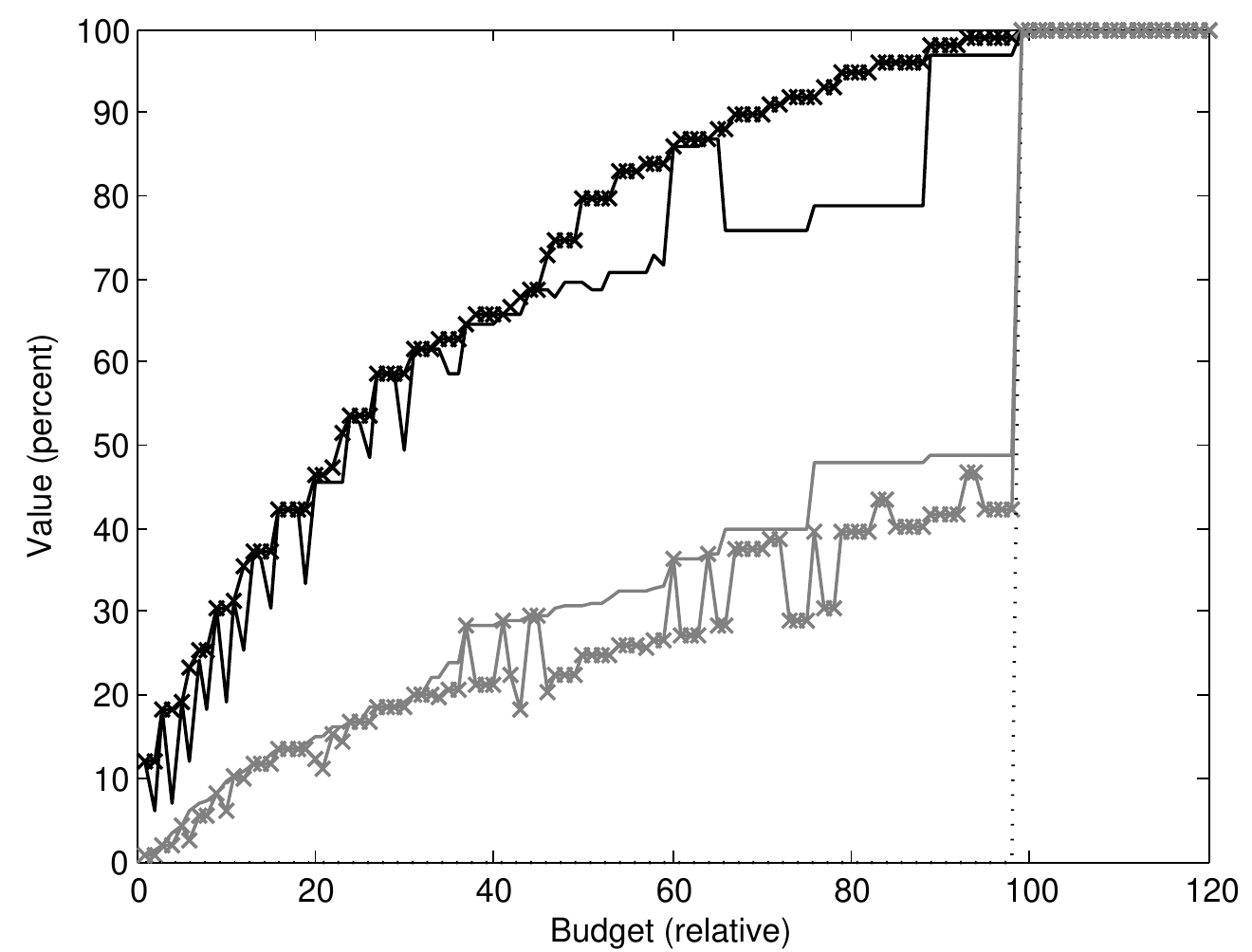}
}%
\subfigure[LOI = $0.4$]{%
\label{fig:second}
\includegraphics[scale=0.49]{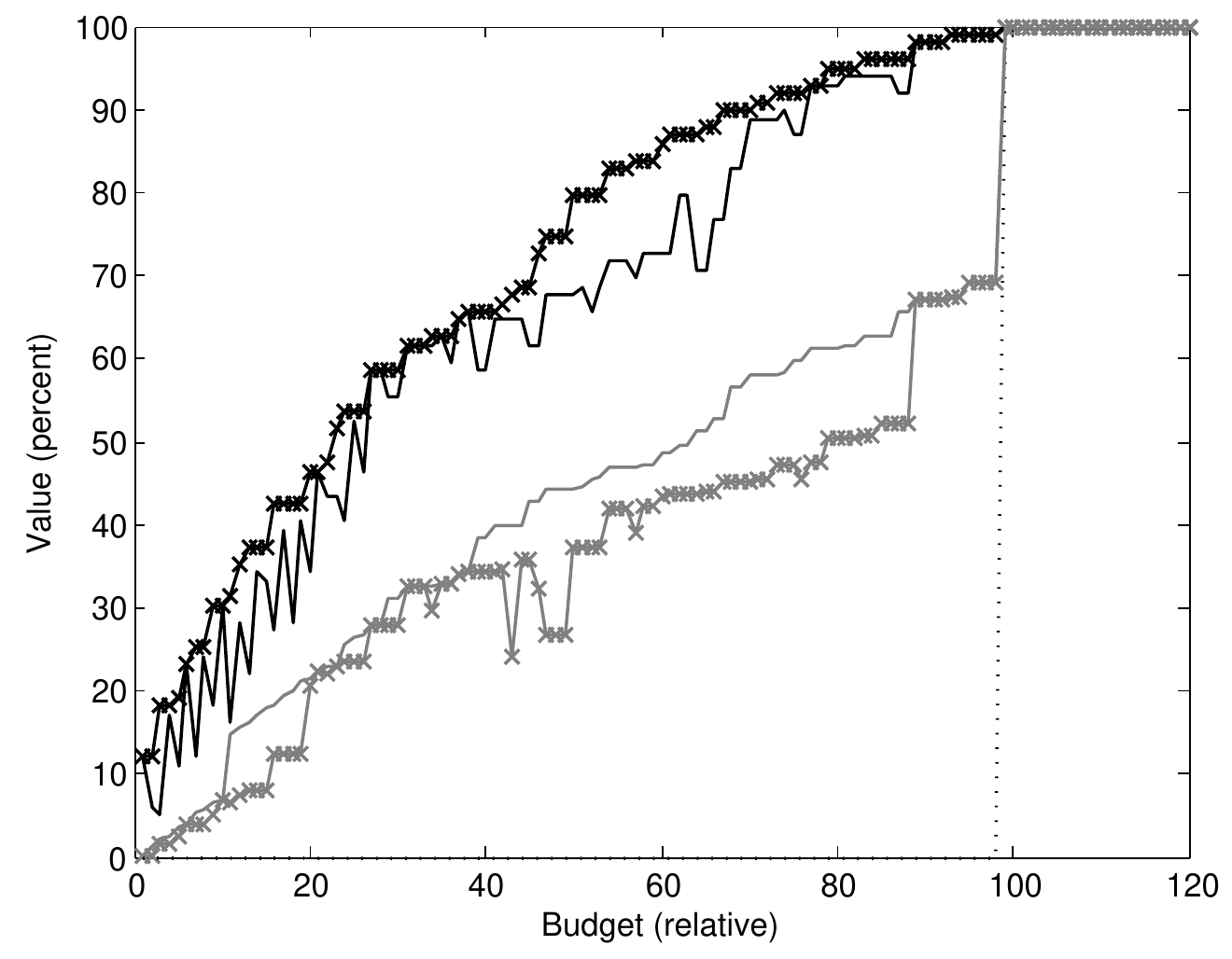}
}\\ 
\subfigure[LOI = $0.2$]{%
\label{fig:third}
\includegraphics[scale=0.5]{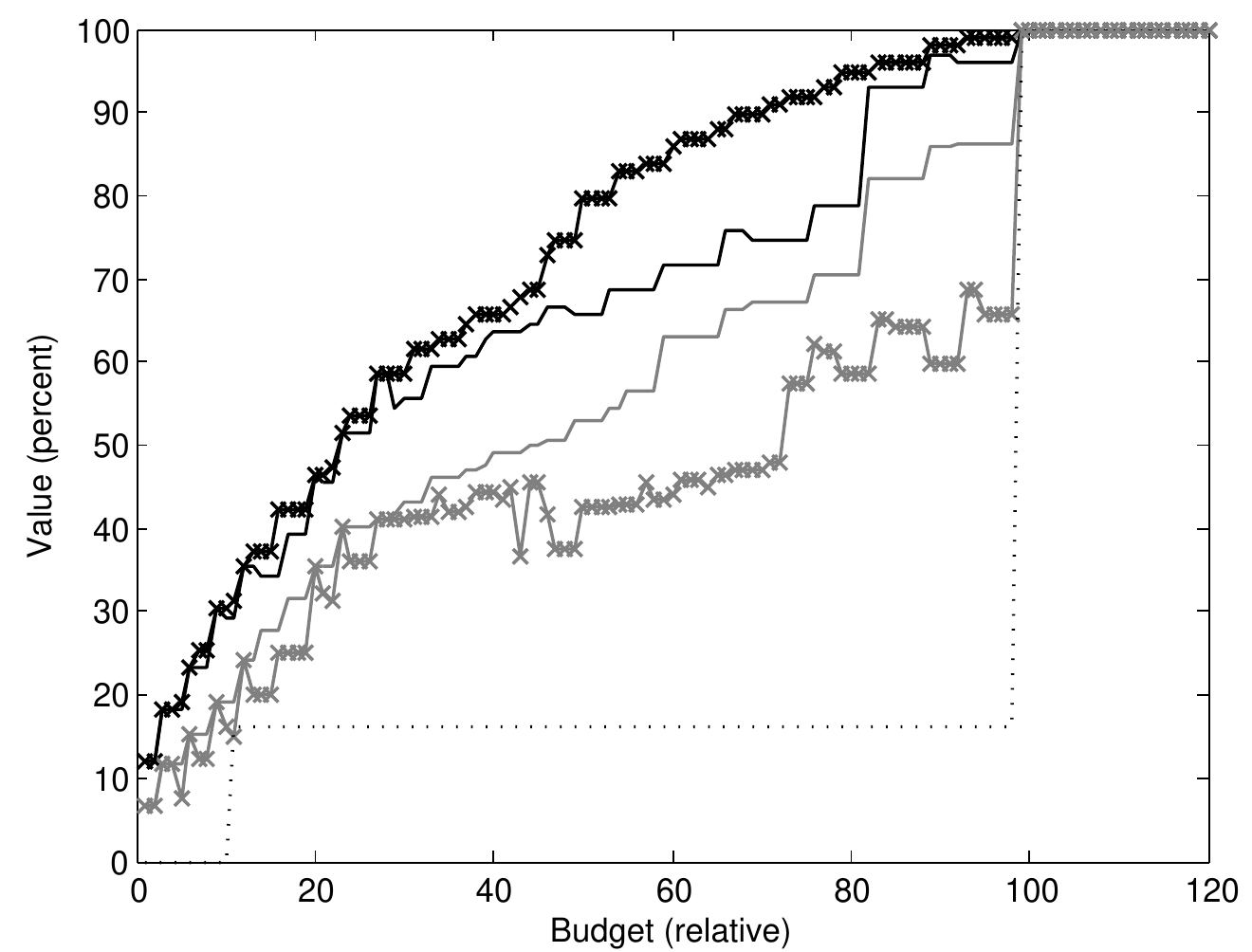}
}%
\subfigure[LOI = $0.05$]{%
\label{fig:fourth}
\includegraphics[scale=0.5]{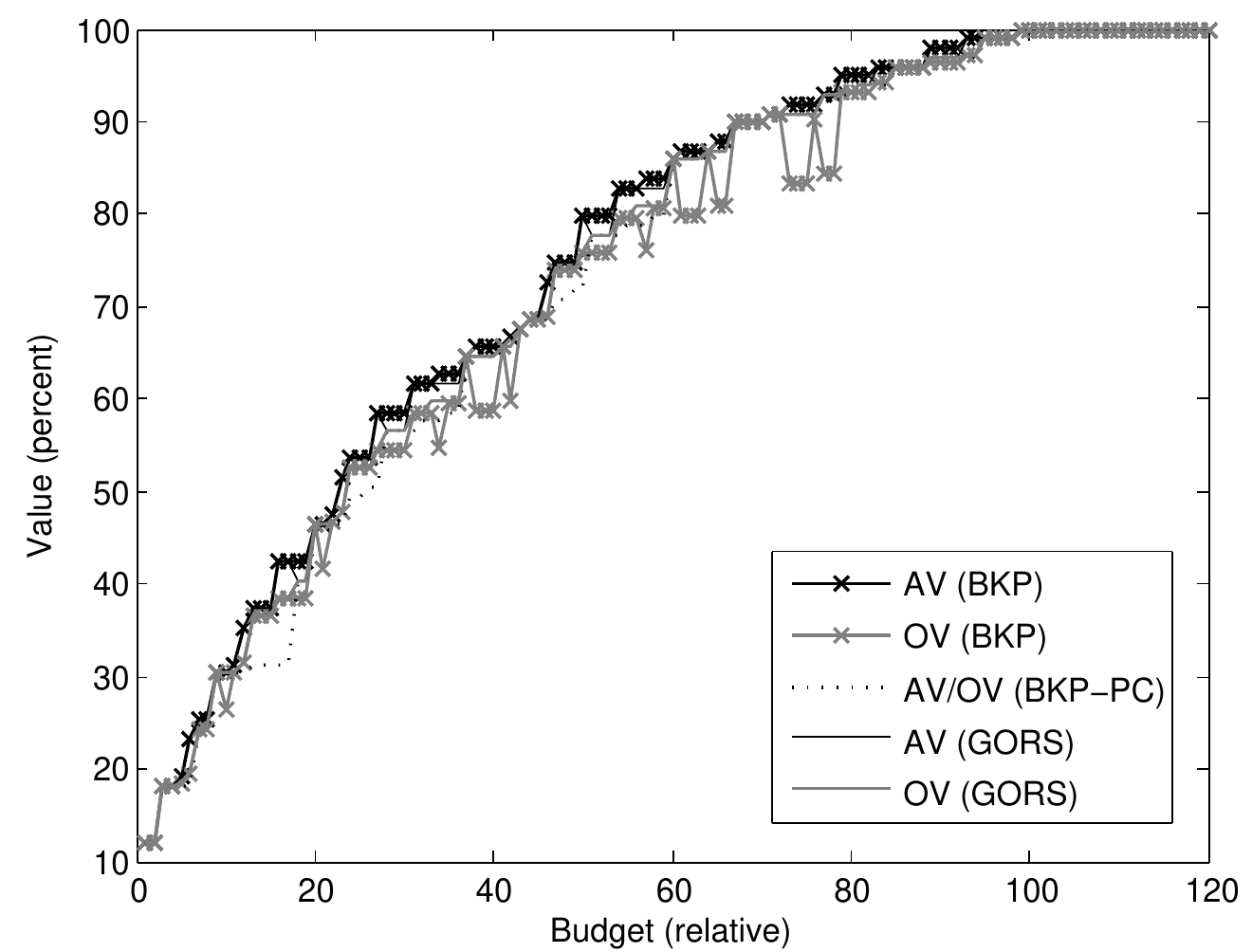}
}%
\end{center}
\caption{%
Sample simulation results for RAN requirements
}%
\label{fig_result_ran}
\end{figure*}
\begin{figure*}
\begin{center}
\subfigure[LOI = $0.8$]{%
\label{fig:first}
\includegraphics[scale=0.5]{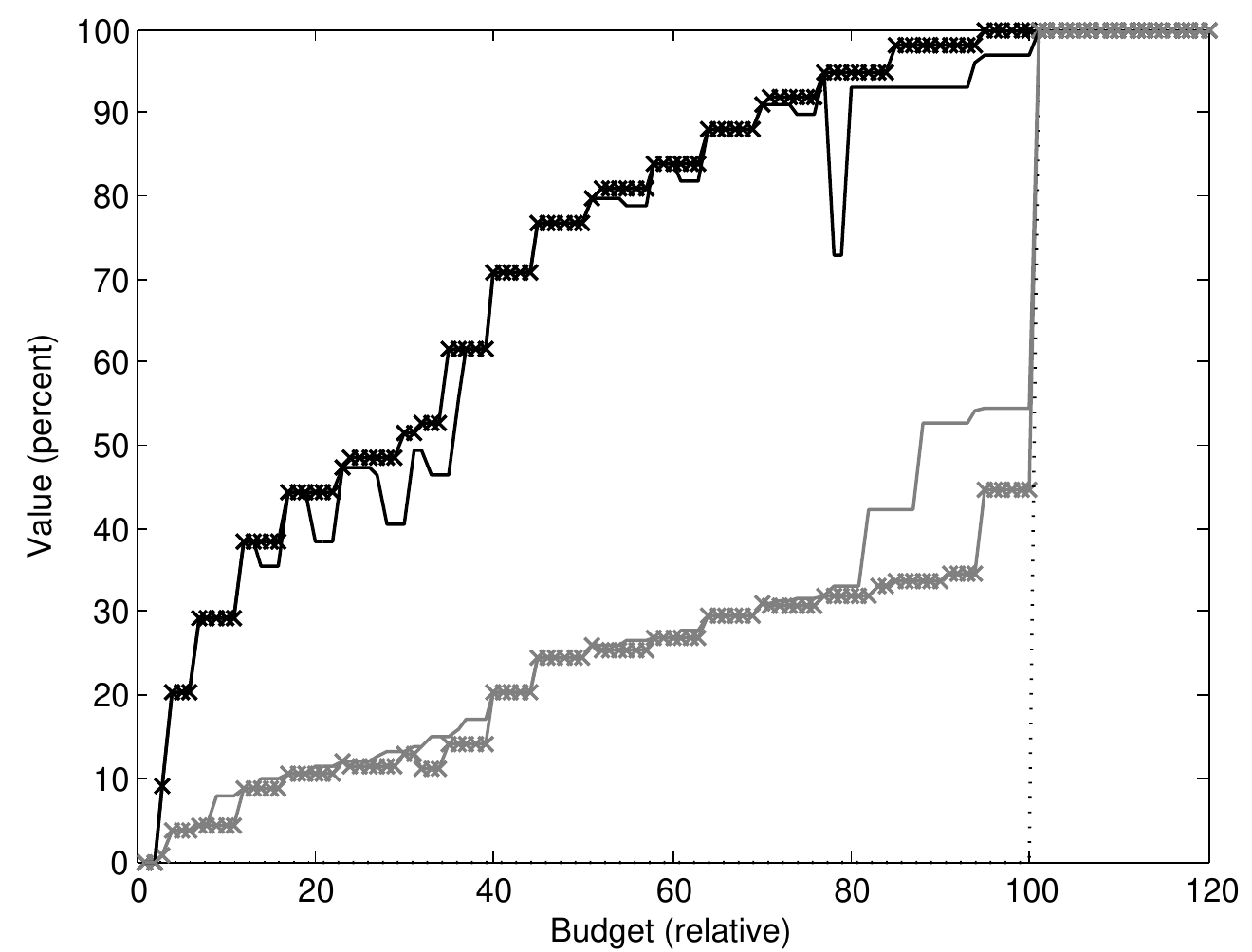}
}%
\subfigure[LOI = $0.4$]{%
\label{fig:second}
\includegraphics[scale=0.5]{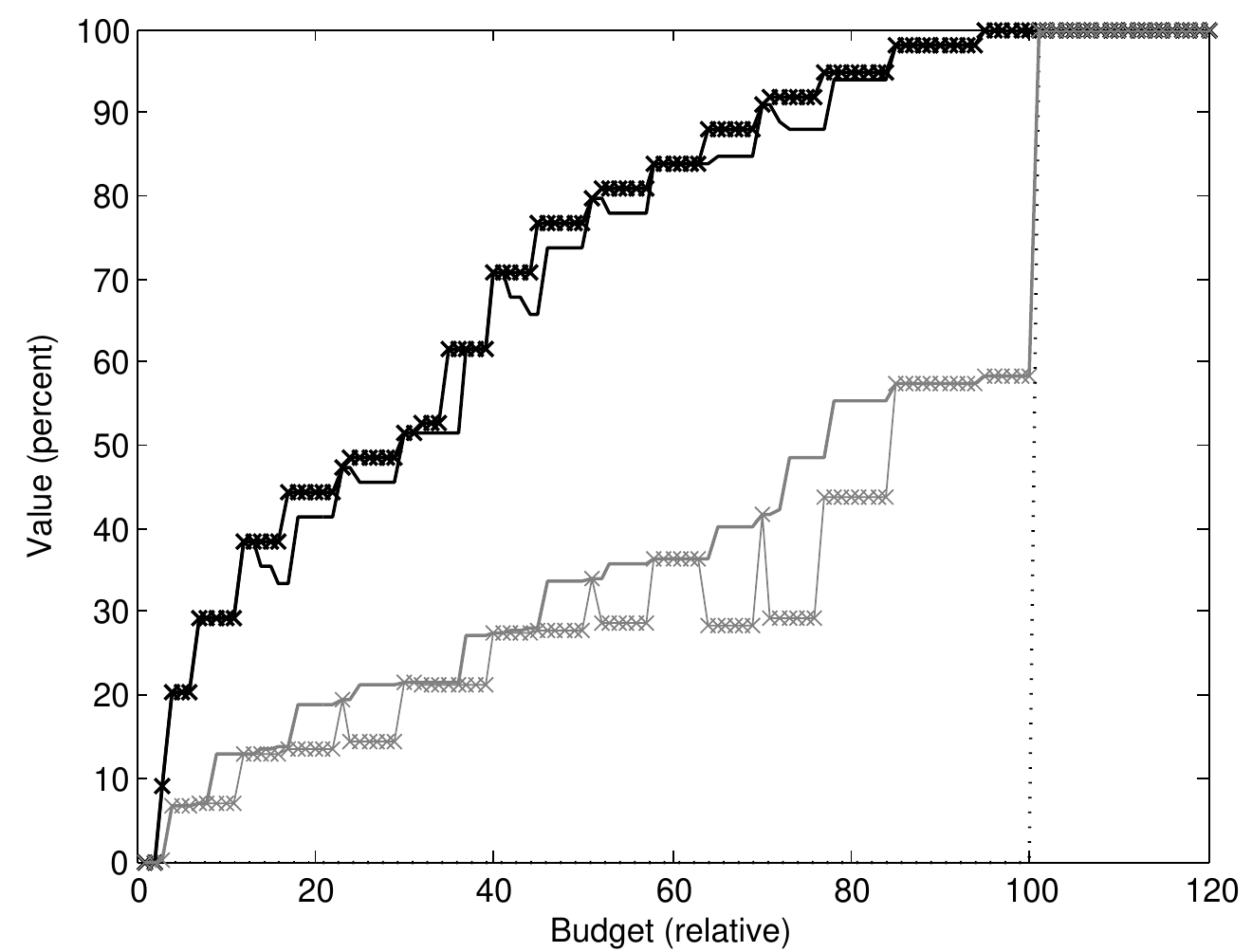}
}\\ 
\subfigure[LOI = $0.2$]{%
\label{fig:third}
\includegraphics[scale=0.49]{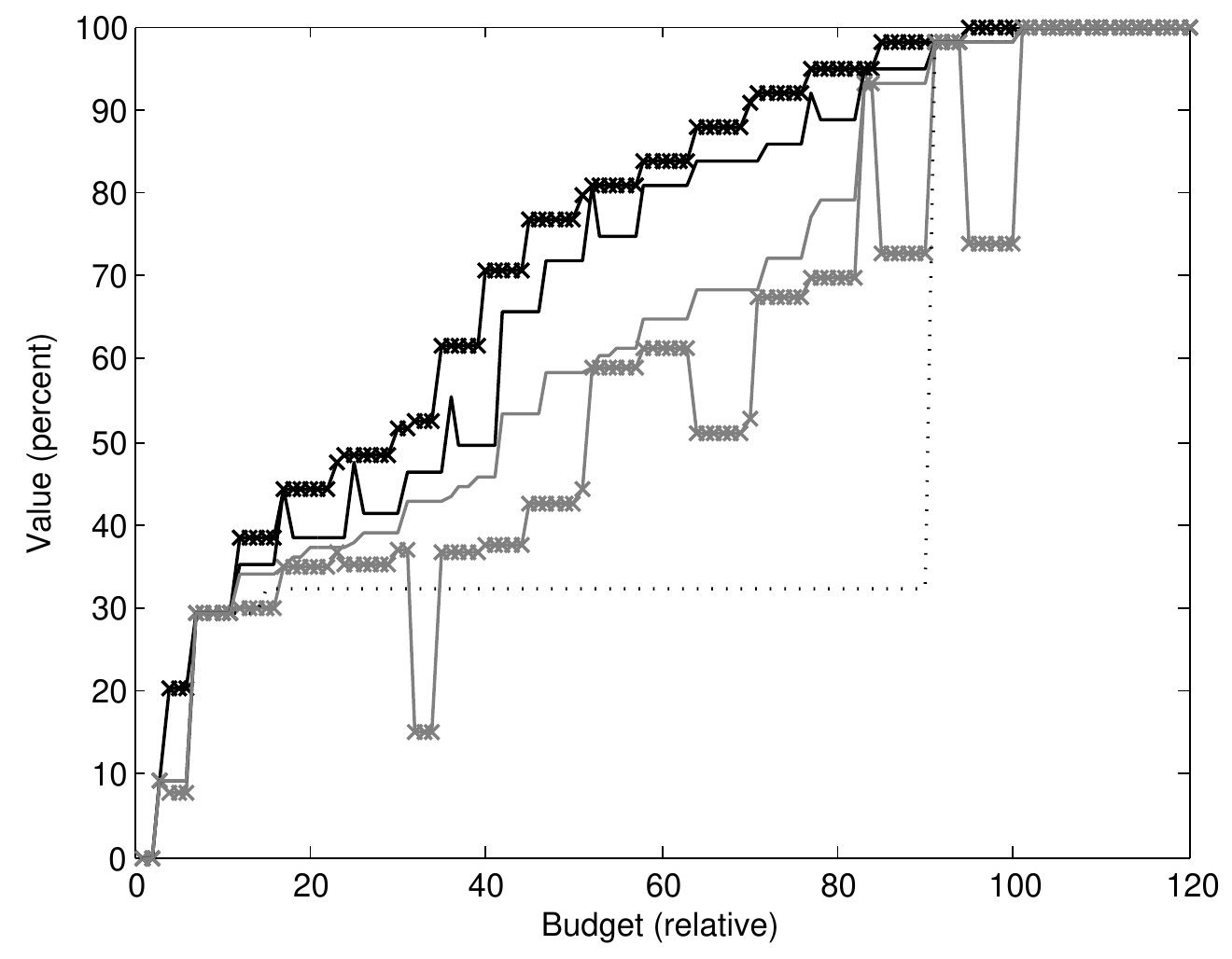}
}%
\subfigure[LOI = $0.05$]{%
\label{fig:fourth}
\includegraphics[scale=0.5]{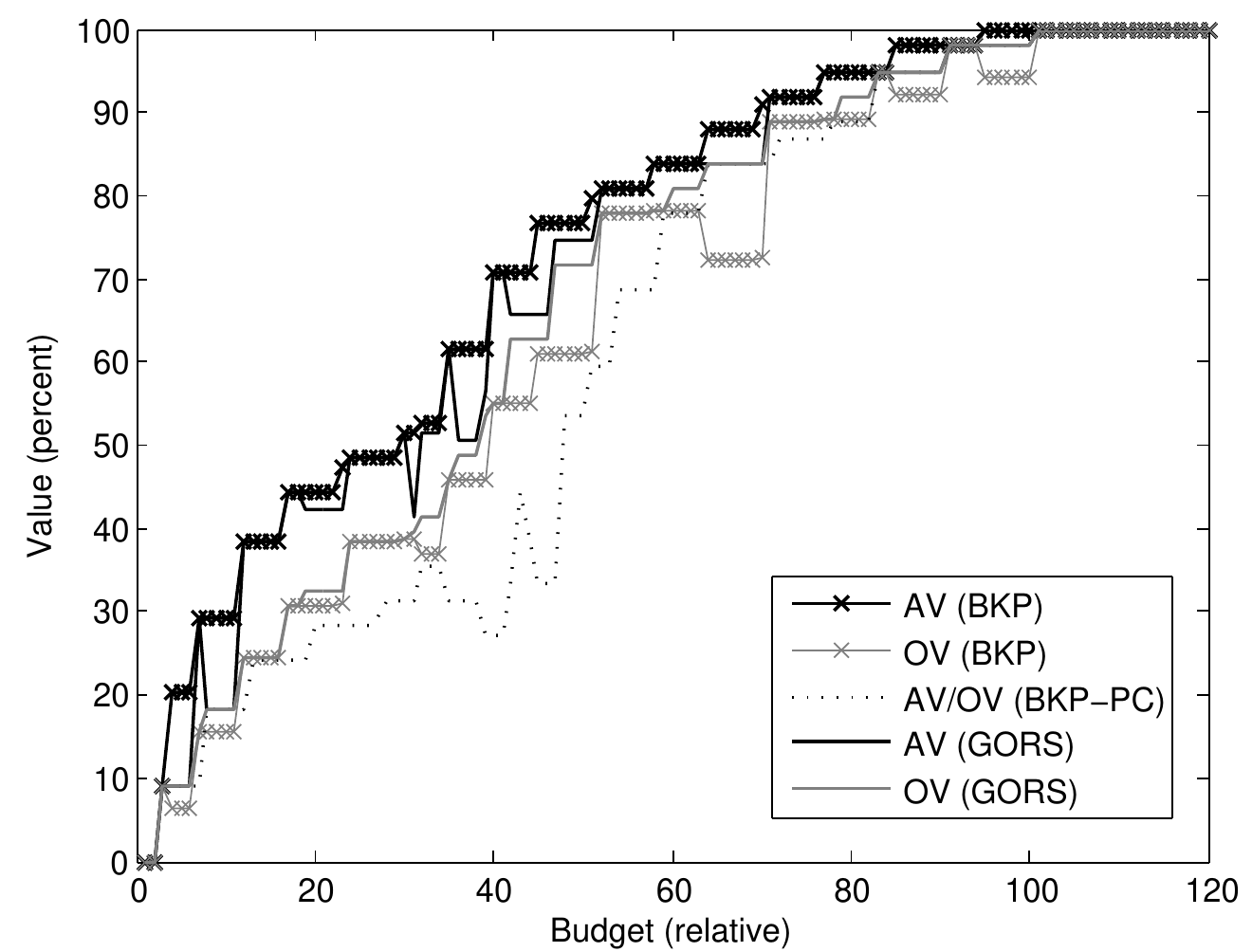}
}%
\end{center}
\caption{%
Sample simulation results for PMR requirements
}%
\label{fig_result_pmr}
\end{figure*}
The results of our simulations also showed (Figure \ref{fig_result_all}) that the efficiency of the BKP-PC model was severely impacted by the selection deficiency problem (SDP). In other words, the BKP-PC model generated the lowest $AV/OV$ unless in the presence of a sufficient budget ($Budget \to 120$) and/or a negligible level of interdependency ($LOI \to 0$). 

For $LOI > 0.25$ in both RAN and PMR requirement sets, almost no $AV/OV$ was achieved by the BKP-PC model unless budget was available for all of the requirements ($b=\sum_{i=1}^{14} c_i = 99$ for the RAN). It was moreover, observed (Figure \ref{fig_result_all}) that the GORS model mitigated the impact of the SDP through considering the strengths of value-related dependencies. 

The BKP model however was not subject to the SDP as it completely ignored dependencies among requirements.

We further observed (Figure~\ref{fig_result_all}) that all the of the selection models performed similar when budget was available for all of the requirements to be implemented ($Budget \geq 99$ for RAN and $Budget \geq$ 101 for PMR) or requirements were mutually independent.    

Figure~\ref{fig_result_ran} and Figure~\ref{fig_result_pmr} compare AV/OV achieved by the simulated selection models for various levels of interdependencies among requirements of the RAN and PMR respectively. A dependency level of $LOI=0.8$ implies that $80\%$ of the explicit value-related dependencies have non-zero strengths. The horizontal axis shows the available budget and the vertical axis shows the percentage of the achieved $AV/OV$. 

In almost every simulation, it was observed that for a given optimal set $O$, AV of $O$ was smaller or equal to the OV of $O$. This is due to the fact that the overall value of an optimal set considers the impacts of value-related dependencies among requirements whereas the accumulated value of an optimal set accumulates the estimated values of selected requirements without considering their value-related dependencies. 

It was further observed that the gap between overall value of an optimal set and its corresponding accumulated value ($|AV-OV|$) increased as the level of interdependency (LOI) grew. The reason is that increasing the LOI increases the chances that selected requirements explicitly depend on the excluded requirements which generally results in decreasing the of overall value of the optimal set. 

The BKP-PC model however always avoids choosing a requirement without its dependencies being selected. In other words, the BKP-PC model avoids dependencies from the optimal set to the excluded set and as such $AV = OV$ always holds for the BKP-PC models.
\subsection{Case Study}
\label{sec_case}
To demonstrate practicality of the GORS model, we performed selection for $23$ requirements of a messaging software referred to as the \textit{Precious Messaging System (\gls{PMS})}. We employed $5$ stakeholders to estimate~\cite{greer_software_2004} the costs and values of requirements of the PMS. Each requirement $r_i$ was assigned an estimated cost of $c_i \in [1,20]$ and an estimated value of $v_i \in [1,20]$ by different stakeholders. 


Stakeholders then, performed pairwise comparisons among requirements~\cite{carlshamre_industrial_2001} to identify explicit value-related dependencies and estimate the strengths of those dependencies. A dependency $(r_i,r_j)$ was assigned an strength of $\rho(r_i,r_j) \in [0,1]$ where $\rho(r_i,r_j)=0$ and $\rho(r_i,r_j)=1$ denoted no dependency and a full dependency from $r_i$ to $r_j$ respectively.  

The median of $5$ estimated costs/values for each requirement $r_i$ then was computed to account for different opinions of stakeholders. In a similar way, for each explicit value-related dependency $(r_i,r_j)$ the median of the $5$ estimated strengths of that dependency was computed to specify the strength of $(r_i,r_j)$. Median was taken as the measure of central tendency as it is less affected by (potentially) extreme opinions of stakeholders compared to the arithmetic mean. 

Table~\ref{table_cs_dependencies} lists the estimated costs and values of the requirement of the PMS as well as the strengths of explicit value-related dependencies among those requirements. The \textit{Dependency Vector} of a requirement $r_i$ in Table \ref{table_cs_dependencies} denotes the strengths of explicit value-related dependencies from $r_i$ to other requirements of the PMS. Based on Table \ref{table_cs_dependencies} and (\ref{Eq_loi}), level of interdependency is calculated for the requirements of the PMS as follows. $LOI(PMS)=\frac{113}{\perm{23}{2}} \approxeq 0.22$.

\begin{table*}[!htb]
	\caption{Estimated values, costs, and strengths of explicit value-related dependencies (based on stakeholder's estimations).}
	\label{table_cs_dependencies}
	\centering
\resizebox {0.9\textwidth }{!}{
	\begin{tabular}{llll}
    \toprule[1.5pt]
	\textbf{ID} &
	\textbf{Value} &
	\textbf{Cost} &
	\textbf{Dependency Vector $\{r_1,...,r_{23}\}$}
	\\
	\midrule
    
    \rowcolor{white}
    $$ &
    $$ &
    $$ &
    $$
    \\
	$r_1$ &
	$20$ &
	$10$ &
	$\{0.0,0.0,0.0,0.5,0.3,0.0,0.6,0.4,0.0,0.0,0.0,0.7,0.0,0.0,0.0,0.0,0.0,0.0,0.0,0.0,0.1,0.0,0.3\}$
	\\
	$r_2$ &
	$20$ &
	$7$ &
	$\{1.0,0.0,0.0,0.6,0.6,0.0,0.6,0.6,0.0,0.3,0.3,0.7,0.0,0.3,0.0,0.7,0.0,0.0,0.0,0.0,0.2,0.0,0.8\}$
	\\
	$r_3$ &
	$6$ &
	$1$ &
	$\{1.0,0.0,0.0,0.5,0.3,0.0,0.6,0.0,0.0,0.0,0.0,0.7,0.0,0.0,0.0,0.0,0.0,0.0,0.0,0.0,0.0,0.0,0.0\}$
	\\
	$r_4$ &
	$17$ &
	$10$ &
	$\{0.0,0.0,0.0,0.0,0.0,0.0,0.0,0.0,0.0,0.0,0.9,0.0,0.0,0.4,0.0,0.7,0.0,0.0,0.0,0.0,0.2,0.0,0.0\}$
	\\
	$r_5$ &
	$3$ &
	$12$ &
	$\{0.0,0.0,0.0,0.0,0.0,0.0,0.0,0.0,0.0,0.0,0.5,0.0,0.0,0.0,0.0,0.2,0.0,0.0,0.0,0.0,0.0,0.0,0.0\}$
	\\
	$r_6$ &
	$20$ &
	$20$ &
	$\{0.0,0.0,0.0,0.0,0.0,0.0,0.6,0.0,0.3,0.3,0.4,0.0,0.0,0.0,0.0,0.7,0.4,0.0,0.0,0.0,0.0,0.0,0.8\}$
	\\
	$r_7$ &
	$15$ &
	$6$ &
	$\{0.0,0.0,0.0,0.0,0.0,0.0,0.0,0.0,0.0,0.0,0.0,0.0,0.0,0.0,0.0,0.0,0.0,0.0,0.0,0.0,0.0,0.0,0.0\}$
	\\
	$r_8$ &
	$8$ &
	$14$ &
	$\{0.0,0.0,0.0,0.0,0.0,0.0,0.0,0.0,0.0,0.0,0.4,0.0,0.0,0.0,0.0,0.0,0.0,0.0,0.0,0.0,0.0,0.0,0.0\}$
	\\
	$r_9$ &
	$20$ &
	$15$ &
	$\{0.0,0.7,0.0,0.0,0.0,0.7,0.0,0.3,0.0,0.0,0.8,0.2,0.4,0.0,0.2,0.7,0.0,0.0,0.0,0.0,0.2,0.0,0.0\}$
	\\
	$r_{10}$ &
	$16$ &
	$10$ &
	$\{0.0,0.7,0.0,0.0,0.3,0.7,0.0,0.3,0.0,0.0,0.0,0.3,0.4,0.0,0.0,0.2,0.0,0.0,0.0,0.0,0.2,0.6,0.0\}$
	\\
	$r_{11}$ &
	$20$ &
	$4$ &
	$\{0.0,0.0,0.0,0.4,0.0,0.0,0.0,0.0,0.0,0.0,0.0,0.0,0.0,0.0,0.0,0.0,0.4,0.0,0.0,0.0,0.2,0.0,0.0\}$
	\\
	$r_{12}$ &
	$10$ &
	$6$ &
	$\{0.0,0.0,0.0,0.0,0.0,0.0,0.0,0.0,0.5,0.5,0.8,0.0,0.0,0.0,0.1,0.4,0.0,0.0,0.0,0.0,0.0,0.4,0.0\}$
	\\
	$r_{13}$ &
	$8$ &
	$5$ &
	$\{0.0,0.0,0.0,0.0,0.0,0.0,0.0,0.0,0.0,0.0,0.8,0.0,0.0,0.0,0.1,0.4,0.0,0.0,0.0,0.0,0.1,0.6,0.0\}$
	\\
	$r_{14}$ &
	$5$ &
	$12$ &
	$\{0.0,0.0,0.0,0.0,0.0,0.0,0.0,0.0,0.0,0.0,0.8,0.0,0.0,0.0,0.0,0.8,0.0,0.0,0.0,0.0,0.8,0.0,0.0\}$
	\\
	$r_{15}$ &
	$8$ &
	$15$ &
	$\{0.0,0.0,0.0,0.0,0.0,0.0,0.0,0.0,0.0,0.0,0.8,0.0,0.0,0.0,0.0,0.8,0.0,0.0,0.0,0.0,1.0,0.0,0.0\}$
	\\
	$r_{16}$ &
	$10$ &
	$3$ &
	$\{0.0,0.0,0.0,0.0,0.0,0.0,0.0,0.0,0.0,0.0,0.8,0.0,0.0,0.0,0.0,0.0,0.5,0.0,0.0,0.0,0.0,0.0,0.0\}$
	\\
	$r_{17}$ &
	$15$ &
	$12$ &
	$\{0.0,0.0,0.0,0.0,0.0,0.0,0.0,0.0,0.0,0.0,0.8,0.0,0.0,0.0,0.0,0.0,0.0,0.5,0.0,0.0,0.0,0.0,0.0\}$
	\\
	$r_{18}$ &
	$10$ &
	$3$ &
	$\{0.0,0.0,0.0,0.0,0.0,0.0,0.0,0.0,0.0,0.0,0.0,0.0,0.0,0.0,0.0,0.0,0.0,0.0,0.0,0.0,0.0,0.0,0.0\}$
	\\
	$r_{19}$ &
	$20$ &
	$20$ &
	$\{1.0,0.3,0.0,0.7,0.5,1.0,0.6,0.5,1.0,0.6,0.4,0.0,0.0,0.1,0.8,0.8,0.0,0.0,0.0,0.0,0.0,0.0,0.8\}$
	\\
	$r_{20}$ &
	$20$ &
	$20$ &
	$\{1.0,0.3,0.0,0.7,0.5,1.0,0.6,0.5,1.0,0.6,0.4,0.0,0.0,0.1,0.8,0.8,0.0,0.0,0.0,0.0,0.0,0.0,0.8\}$
	\\
	$r_{21}$ &
	$15$ &
	$12$ &
	$\{0.0,0.0,0.0,0.0,0.0,0.0,0.0,0.0,0.0,0.0,0.8,0.0,0.0,0.0,0.1,0.8,0.2,0.0,0.0,0.0,0.0,0.3,0.0\}$
	\\
	$r_{22}$ &
	$20$ &
	$15$ &
	$\{0.0,0.0,0.0,0.0,0.0,0.0,0.0,0.0,0.0,0.0,0.0,0.0,0.0,0.0,0.0,0.0,0.0,0.0,0.0,0.0,0.0,0.0,0.0\}$
	\\
	$r_{23}$ &
	$20$ &
	$10$ &
	$\{0.0,0.0,0.0,0.0,0.0,0.0,0.0,0.0,0.0,0.0,0.0,0.0,0.0,0.0,0.0,0.0,0.0,0.0,0.0,0.0,0.0,0.0,0.0\}$
    \\[2ex]
    \bottomrule[1.5pt]
\end{tabular}}%

\end{table*}

\begin{table*}[!htb]
		\caption{Solution vectors and their corresponding overall value (OV) provided by the experimented selection models in the presence of various budget constraints. A selection variable $x_i$ denotes whether requirement $r_i$ is selected ($x_i=1$) or otherwise ($x_i=0$)}
		\label{table_cs_solutions}
		\centering
\resizebox {0.88\textwidth }{!}{
	\begin{tabular}{llll}
		\toprule[1.5pt]
		\textbf{Budget} &
		\textbf{Selection Model} &
		\textbf{Overall Value (percent)} &
		\boldmath{}\textbf{Solution Vector $\{x_1,...,x_{23}\}$}\unboldmath{}
		\\ \midrule[1.5pt]
		\multirow{3}[1]{*}{16} &
		BKP &
		5.21 &
		$\{0,1,1,0,0,0,0,0,0,0,1,0,0,0,0,1,0,0,0,0,0,0,0\}$
		\\
		&
		BKP-PC &
		10.74 &
		$\{0,0,0,0,0,0,1,0,0,0,0,0,0,0,0,0,0,0,0,0,0,0,1\}$
		\\
		&
		GORS &
		12.88 &
		$\{0,0,0,0,0,0,1,0,0,0,1,0,0,0,0,1,0,1,0,0,0,0,0\}$
		\bigstrut[b]\\
		\hline
		\multirow{3}[2]{*}{46} &
		BKP &
		23.25 &
		$\{1,1,1,0,0,0,1,0,0,0,1,0,0,0,0,1,0,1,0,0,0,0,1\}$
		\bigstrut[t]\\
		&
		BKP-PC &
		19.94 &
		$\{0,0,0,0,0,0,1,0,0,0,0,0,0,0,0,0,0,1,0,0,0,1,1\}$
		\\
		&
		GORS &
		26.63 &
		$\{0,0,0,0,0,0,1,0,0,0,1,0,1,0,0,1,0,1,0,0,0,1,1\}$
		\bigstrut[b]\\
		\hline
		\multirow{3}[2]{*}{71} &
		BKP &
		31.07 &
		$\{1,1,1,1,0,0,1,0,0,1,1,1,0,0,0,1,0,1,0,0,0,0,1\}$
		\bigstrut[t]\\
		&
		BKP-PC &
		19.94 &
		$\{0,0,0,0,0,0,1,0,0,0,0,0,0,0,0,0,0,1,0,0,0,1,1\}$
		\\
		&
		GORS &
		34.60 &
		$\{1,1,1,0,0,0,1,0,0,0,1,1,1,0,0,1,0,1,0,0,0,1,1\}$
		\bigstrut[b]\\
		\hline
		\multirow{3}[2]{*}{76} &
		BKP &
		32.06 &
		$\{1,1,1,1,0,0,1,0,0,1,1,1,1,0,0,1,0,1,0,0,0,0,1\}$
		\bigstrut[t]\\
		&
		BKP-PC &
		19.94 &
		$\{0,0,0,0,0,0,1,0,0,0,0,0,0,0,0,0,0,1,0,0,0,1,1\}$
		\\
		&
		GORS &
		35.74 &
		$\{1,1,1,1,0,0,1,0,0,0,1,1,0,0,0,1,0,1,0,0,0,1,1\}$
		\bigstrut[b]\\
		\hline
		\multirow{3}[2]{*}{81} &
		BKP &
		31.90 &
		$\{1,1,1,1,0,0,1,0,0,1,1,0,1,0,0,1,1,1,0,0,0,0,1\}$
		\bigstrut[t]\\
		&
		BKP-PC &
		19.94 &
		$\{0,0,0,0,0,0,1,0,0,0,0,0,0,0,0,0,0,1,0,0,0,1,1\}$
		\\
		&
		GORS &
		37.98 &
		$\{0,0,0,1,0,0,0,0,0,0,1,0,0,1,0,1,1,1,0,0,1,1,1\}$
		\bigstrut[b]\\
		\hline
		\multirow{3}[2]{*}{141} &
		BKP &
		44.11 &
		$\{1,1,1,1,0,0,1,0,1,1,1,1,1,0,0,1,1,1,0,1,0,1,1\}$
		\bigstrut[t]\\
		&
		BKP-PC &
		53.37 &
		$\{0,0,0,1,1,0,1,1,0,0,1,0,1,1,1,1,1,1,0,0,1,1,1\}$
		\\
		&
		GORS &
		59.45 &
		$\{1,1,1,1,0,0,1,0,0,1,1,1,1,1,1,1,1,1,0,0,1,1,1\}$
		\bigstrut[b]\\
		\hline
		\multirow{3}[2]{*}{146} &
		BKP &
		45.40 &
		$\{1,1,1,1,0,0,1,0,1,1,1,1,0,0,0,1,1,1,0,1,1,1,1\}$
		\bigstrut[t]\\
		&
		BKP-PC &
		53.37 &
		$\{0,0,0,1,1,0,1,1,0,0,1,0,1,1,1,1,1,1,0,0,1,1,1\}$
		\\
		&
		GORS &
		60.43 &
		$\{1,1,1,1,0,0,1,1,0,0,1,1,1,1,1,1,1,1,0,0,1,1,1\}$
		\bigstrut[b]\\
		\hline
		\multirow{3}[2]{*}{151} &
		BKP &
		46.87 &
		$\{1,1,1,1,0,0,1,0,1,1,1,1,1,0,0,1,1,1,0,1,1,1,1\}$
		\bigstrut[t]\\
		&
		BKP-PC &
		53.37 &
		$\{0,0,0,1,1,0,1,1,0,0,1,0,1,1,1,1,1,1,0,0,1,1,1\}$
		\\
		&
		GORS &
		62.27 &
		$\{1,1,1,1,0,1,1,0,0,0,1,1,1,1,1,1,1,1,0,0,1,1,1\}$
		\bigstrut[b]\\
		\hline
		\multirow{3}[2]{*}{156} &
		BKP &
		46.87 &
		$\{1,1,1,1,0,0,1,0,1,1,1,1,1,0,0,1,1,1,0,1,1,1,1\}$
		\bigstrut[t]\\
		&
		BKP-PC &
		53.37 &
		$\{0,0,0,1,1,0,1,1,0,0,1,0,1,1,1,1,1,1,0,0,1,1,1\}$
		\\
		&
		GORS &
		62.27 &
		$\{1,1,1,1,0,1,1,0,0,0,1,1,1,1,1,1,1,1,0,0,1,1,1\}$
		\bigstrut[b]\\
		\hline
		\multirow{3}[2]{*}{161} &
		BKP &
		50.12 &
		$\{1,1,1,1,0,1,1,0,1,1,1,1,1,0,0,1,1,1,0,1,0,1,1\}$
		\bigstrut[t]\\
		&
		BKP-PC &
		53.37 &
		$\{0,0,0,1,1,0,1,1,0,0,1,0,1,1,1,1,1,1,0,0,1,1,1\}$
		\\
		&
		GORS &
		64.23 &
		$\{1,1,1,1,0,1,1,0,0,1,1,1,1,1,1,1,1,1,0,0,1,1,1\}$
		\bigstrut[b]\\
		\hline
		\multirow{3}[2]{*}{166} &
		BKP &
		51.41 &
		$\{1,1,1,1,0,1,1,0,1,1,1,1,0,0,0,1,1,1,0,1,1,1,1\}$
		\bigstrut[t]\\
		&
		BKP-PC &
		53.37 &
		$\{0,0,0,1,1,0,1,1,0,0,1,0,1,1,1,1,1,1,0,0,1,1,1\}$
		\\
		&
		GORS &
		64.72 &
		$\{1,1,1,1,0,1,1,0,1,0,1,1,1,1,1,1,1,1,0,0,1,1,1\}$
		\bigstrut[b]\\
		\hline
		\multirow{3}[2]{*}{171} &
		BKP &
		52.88 &
		$\{1,1,1,1,0,1,1,0,1,1,1,1,1,0,0,1,1,1,0,1,1,1,1\}$
		\bigstrut[t]\\
		&
		BKP-PC &
		53.37 &
		$\{0,0,0,1,1,0,1,1,0,0,1,0,1,1,1,1,1,1,0,0,1,1,1\}$
		\\
		&
		GORS &
		64.72 &
		$\{1,1,1,1,0,1,1,0,1,0,1,1,1,1,1,1,1,1,0,0,1,1,1\}$
		\bigstrut[b]\\
		\hline
		\multirow{3}[2]{*}{176} &
		BKP &
		52.88 &
		$\{1,1,1,1,0,1,1,0,1,1,1,1,1,0,0,1,1,1,0,1,1,1,1\}$
		\bigstrut[t]\\
		&
		BKP-PC &
		53.37 &
		$\{0,0,0,1,1,0,1,1,0,0,1,0,1,1,1,1,1,1,0,0,1,1,1\}$
		\\
		&
		GORS &
		66.69 &
		$\{1,1,1,1,0,1,1,1,0,1,1,1,1,1,1,1,1,1,0,0,1,1,1\}$
		\bigstrut[b]\\
		\hline
		\multirow{3}[2]{*}{181} &
		BKP &
		51.35 &
		$\{1,1,1,1,0,1,1,0,1,1,1,1,1,0,0,1,1,1,1,1,0,1,1\}$
		\bigstrut[t]\\
		&
		BKP-PC &
		53.37 &
		$\{0,0,0,1,1,0,1,1,0,0,1,0,1,1,1,1,1,1,0,0,1,1,1\}$
		\\
		&
		GORS &
		67.18 &
		$\{1,1,1,1,0,1,1,1,1,0,1,1,1,1,1,1,1,1,0,0,1,1,1\}$
		\bigstrut[b]\\
		\hline
		\multirow{3}[2]{*}{186} &
		BKP &
		52.64 &
		$\{1,1,1,1,0,1,1,0,1,1,1,1,0,0,0,1,1,1,1,1,1,1,1\}$
		\bigstrut[t]\\
		&
		BKP-PC &
		53.37 &
		$\{0,0,0,1,1,0,1,1,0,0,1,0,1,1,1,1,1,1,0,0,1,1,1\}$
		\\
		&
		GORS &
		73.83 &
		$\{1,1,0,1,1,1,1,1,1,1,1,1,1,1,0,1,1,1,0,0,1,1,1\}$
		\bigstrut[b]\\
		\hline
		\multirow{3}[2]{*}{191} &
		BKP &
		54.11 &
		$\{1,1,1,1,0,1,1,0,1,1,1,1,1,0,0,1,1,1,1,1,1,1,1\}$
		\bigstrut[t]\\
		&
		BKP-PC &
		53.37 &
		$\{0,0,0,1,1,0,1,1,0,0,1,0,1,1,1,1,1,1,0,0,1,1,1\}$
		\\
		&
		GORS &
		75.31 &
		$\{1,1,1,1,1,1,1,1,1,1,1,1,1,1,0,1,1,1,0,0,1,1,1\}$
		\bigstrut[b]\\
		\hline
		\multirow{3}[2]{*}{196} &
		BKP &
		54.11 &
		$\{1,1,1,1,0,1,1,0,1,1,1,1,1,0,0,1,1,1,1,1,1,1,1\}$
		\bigstrut[t]\\
		&
		BKP-PC &
		53.37 &
		$\{0,0,0,1,1,0,1,1,0,0,1,0,1,1,1,1,1,1,0,0,1,1,1\}$
		\\
		&
		GORS &
		75.31 &
		$\{1,1,1,1,1,1,1,1,1,1,1,1,1,1,0,1,1,1,0,0,1,1,1\}$
		\bigstrut[b]\\
		\hline
		\multirow{3}[1]{*}{246} &
		BKP &
		100.00 &
		$\{1,1,1,1,1,1,1,1,1,1,1,1,1,1,1,1,1,1,1,1,1,1,1\}$
		\bigstrut[t]\\
		&
		BKP-PC &
		100.00 &
		$\{1,1,1,1,1,1,1,1,1,1,1,1,1,1,1,1,1,1,1,1,1,1,1\}$
		\\
		&
		GORS &
		100.00 &
		$\{1,1,1,1,1,1,1,1,1,1,1,1,1,1,1,1,1,1,1,1,1,1,1\}$
		\\ \bottomrule[1.5pt]
	\end{tabular}}%

\end{table*}

Based on the estimations provided by the stakeholders, FRIG of the PMS was constructed (Figure~\ref{fig_cs}) and selections were performed using the GORS model as well as the BKP and BKP-PC models. Requirement selections were performed for various ranges of budgets ($Budget \in \{1,...,260\}$) to examine the performance of the selection models. 

Figure \ref{fig_cs_result} summarizes the results of our experiments by comparing the accumulated values (AV) and/or overall values (OV) achieved by the selection models. The horizontal axis shows the available budget ($Budget=\{1,...,260\}$) and the vertical axis shows the percentages of $AV/OV$. Table~\ref{table_cs_solutions} lists some of the optimal sets provided by the employed selection models in the presence of various budget constraints. 

Consistent with the simulations, the results of our case study demonstrated (Figure \ref{fig_cs_result} and Table \ref{table_cs_solutions}) that the BKP model always maximized the accumulated value of the selected requirements (optimal set) while the GORS model maximized the overall value of selected requirements. Moreover, maximizing accumulated value and overall value of an optimal set demonstrated to be conflicting objectives.

\begin{figure}[!htb]
	\centering
	\centerline{\hspace{-0.0em}\includegraphics[scale=0.5]{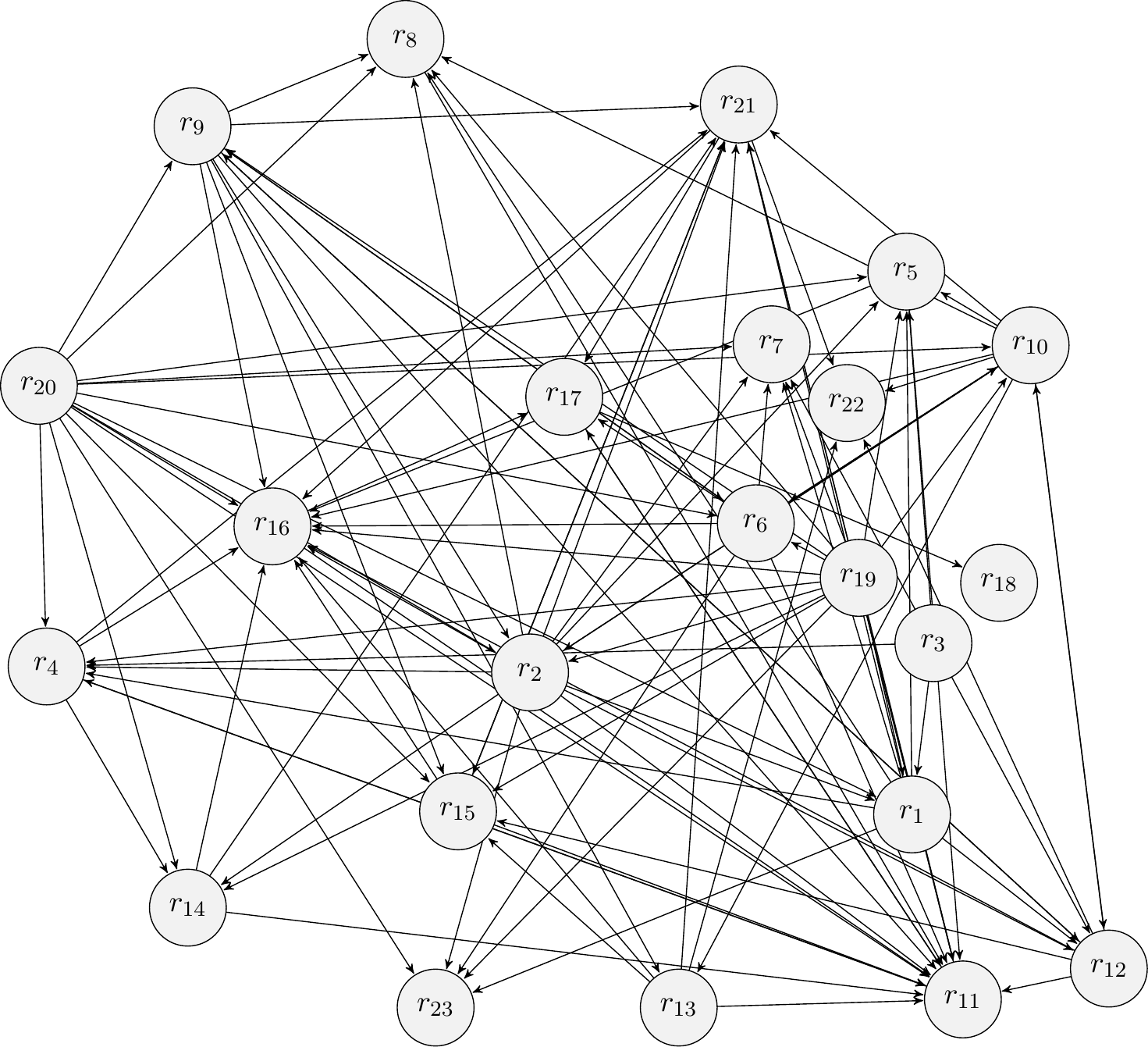}}
	\caption{FRIG of the PMS (Strengths of dependencies are not represented for the sake of readability)}
	\label{fig_cs}
\end{figure}

Furthermore, the results of our experiments showed (Figure \ref{fig_cs_result} and Table \ref{table_cs_solutions}) that the GORS model mitigated the adverse impact of the selection deficiency problem (SDP) through considering the strengths of value-related requirement dependencies while the efficiency of the BKP-PC model was negatively impacted by the SDP. For instance, we observed (Table~\ref{table_cs_solutions}) that for $Budget =81$, overall value of the optimal set provided by the GORS model was almost twice as higher as the overall value provided by the BKP-PC model. The BKP model on the contrary, was not vulnerable to the SDP as it totally ignores dependencies.


\begin{figure}[h!]
	\centering
	\centerline{\hspace{-0.5em}\includegraphics[scale=0.6]{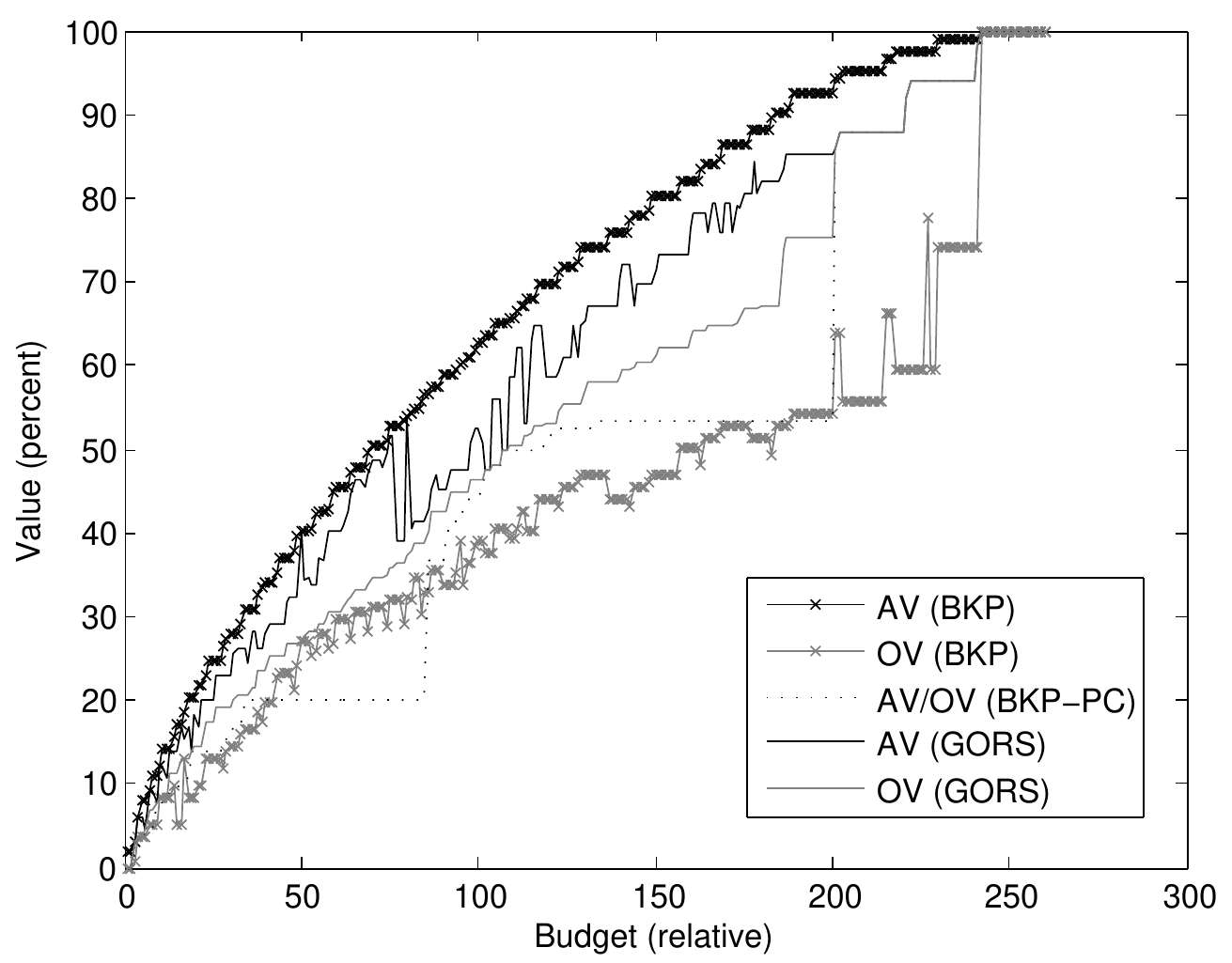}}
	\caption{Selection results for the PMS (LOI $\approxeq 22\%$)}
	\label{fig_cs_result}
\end{figure}

\section{Automated Identification of Explicit Value-related Requirement Dependencies}
\label{sec_identification}

Automated identification of value-related requirement dependencies and their strengths has not been discussed in the existing literature. Nonetheless, various techniques from information retrieval and data mining domain~\cite{Halpern01062015} can be borrowed to assist such automation. 

This section discusses one of the several possible approaches to automate identification of value-related requirement dependencies. Our proposed approach is based on mining preferences of (potential) users of a software~\cite{do2016incorporating} to identify both the existence and the strengths of explicit value-related dependencies among requirements of a software. 


It has been widely recognize in the literature that users' preferences (customers' preferences) of software requirements can determine their customer values~\cite{do2016incorporating,racheva_business_2010} as highly preferred software requirements are more likely to be purchased and used by the (potential) users. On the other hand, users preferring a requirement $r_j$ may also prefer a requirement $r_i$ (with the probability $p(r_i|r_j)$). This is known as \textit{Market Basket Analysis} or \textit{Association Rule Mining} in data mining domain~\cite{Halpern01062015}. 

An association from a requirement $r_j$ to $r_i$ (users preferring $r_j$ will also prefer $r_i$) can also be interpreted as a causal relation~\cite{sprenger2016foundations} from $r_j$ to $r_i$ meaning that preference (selection) of $r_j$ may cause preference of $r_i$ by the users and therefore give the value of $r_i$. As such, it is clear that a causal relation from $r_j$ to $r_i$ also can be interpreted as a value-related dependency from $r_i$ to $r_j$ (value of $r_i$ depends on preference of $r_j$ by the users).      

Hence, association rule mining of users' preference of requirements can be used for identification of value-related requirement dependencies and the strengths of those dependencies. In this context, measures of causal strength can be used to estimate the strengths of value-related dependencies.      

One of the most commonly adopted measures of causal strength is \textit{Pearl's Measure of Causal Strength}~\cite{sprenger2016foundations,Halpern01062015,pearl2009causality,janzing2013quantifying,eells1991probabilistic} which is denoted by \gls{eta}$_{i,j}$ in $(\ref{Eq_Pearl})$ and derived by $p(r_i|r_j)$. That is the chances that users preferring $r_j$ will also prefer $r_i$. This can be used to estimate the strength of an explicit value-related dependency from $r_i$ to $r_j$. Pearl's measure then can be mapped into a desired fuzzy membership function $\rho(r_i,r_j)$ (which gives the strengths of value-related dependencies in FRIGs) as demonstrated in Figure~\ref{fig_membership}.

Various membership functions could be explored for this mapping based on the preference of the analyst. For instance, the membership function of Figure~\ref{fig_membership_2} treats dependencies with casual strengths below $0.16$ ($\eta_{i,j} < 0.16$) as not worth considering while dependencies with $\eta_{i,j} \geq 0.83$ are treated as full dependencies of strength $1$. Such membership function might be suitable for selection models that formulate dependencies as precedence constraints (BKP-PC models). 

In such models, it might be reasonable to consider a strong causal dependency (say $\eta_{i,j} \geq 0.95$) as a precedence relation rather than ignoring it (BKP-PC models only capture precedence relations). Figure~\ref{fig_membership_3} and Figure~\ref{fig_membership_4} depict other alternative membership functions which unlike membership functions of Figure~\ref{fig_membership_1} and Figure~\ref{fig_membership_2} do not assume linearity for mapping $\eta_{i,j}$ to $\rho(r_i,r_j)$. 

\begin{align}
\label{Eq_Pearl}
&\phantom{ssssss}\eta_{i,j}= p(r_i|r_j)=\frac{p(r_i,r_j)}{p(r_j)},\phantom{s}\eta_{i,j} \in [0,1]
\end{align}
\begin{figure}[!htb]
	\begin{center}
		\subfigure[$$]{%
			\label{fig_membership_1}
			\includegraphics[scale=0.66]{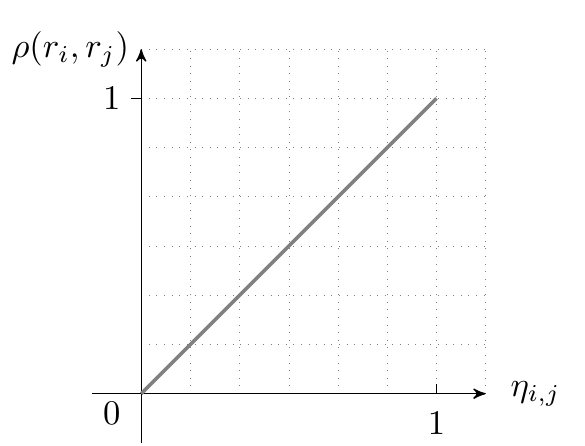}
		}
		\subfigure[$$]{%
			\label{fig_membership_2}
			\includegraphics[scale=0.66]{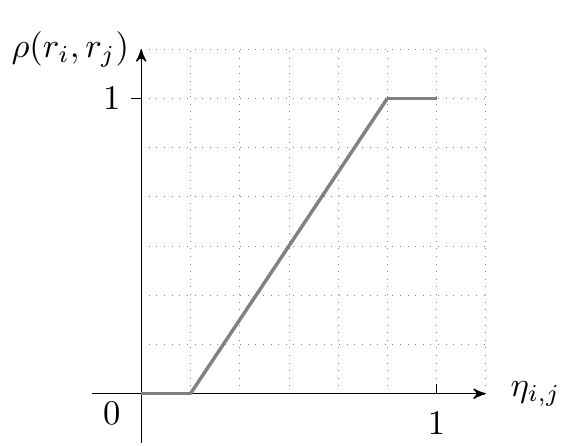}
		}
		\subfigure[$$]{%
			\label{fig_membership_3}
			\includegraphics[scale=0.66]{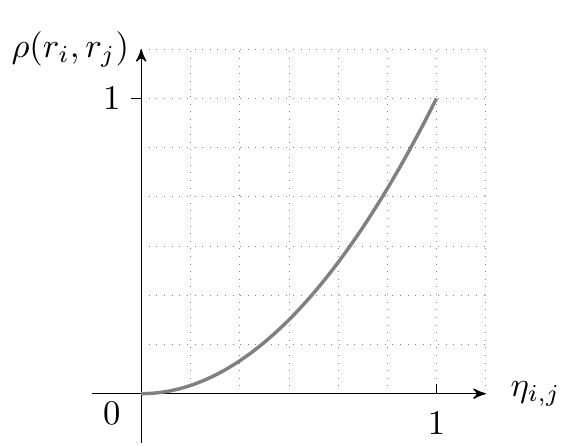}
		}
		\subfigure[$$]{%
			\label{fig_membership_4}
			\includegraphics[scale=0.66]{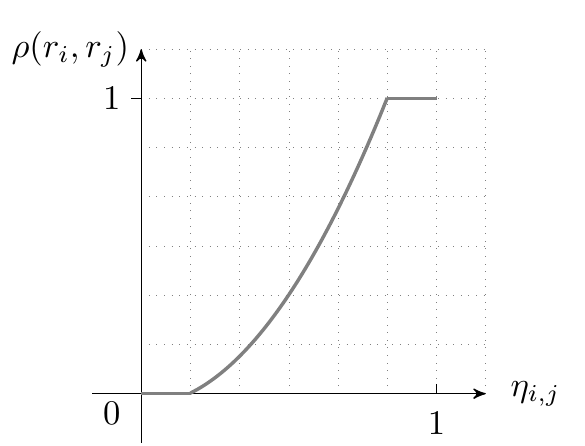}
		}
	\end{center}
	\vspace{-1 em}
	\caption{%
		Sample mappings from $\eta_{i,j}$ to its corresponding membership functions $\rho(r_i,r_j)$
	}%
	\label{fig_membership}
\end{figure}

Finally, users' preferences of software requirements can be gathered in different ways~\cite{leung2011probabilistic,holland2003preference,sayyad2013value} depending on the nature of a software release and the current state of a software. For the first release of a software, users' preferences could be gathered by conventional market research approaches such as conducting surveys or referring to the users' feedbacks or sales records of the similar software products in the market. For the future releases of a software, or when re-engineering of a software is of interest (e.g. for legacy systems) users' feedbacks and sales records of the previous releases of the software might be used in combination with market research approaches to find users' preferences. 

It is also worth mentioning that in cases where collecting users' preferences in large quantities is difficult to achieve, re-sampling methods~\cite{wu1986jackknife} could be used to automatically generate larger samples of users' preferences from a relatively small sample while maintaining the characteristics of the initial sample~\cite{macke2009generating}. 

\begin{figure}[!htb]
	\begin{center}
		\hspace{-1em}\includegraphics[scale=0.67]{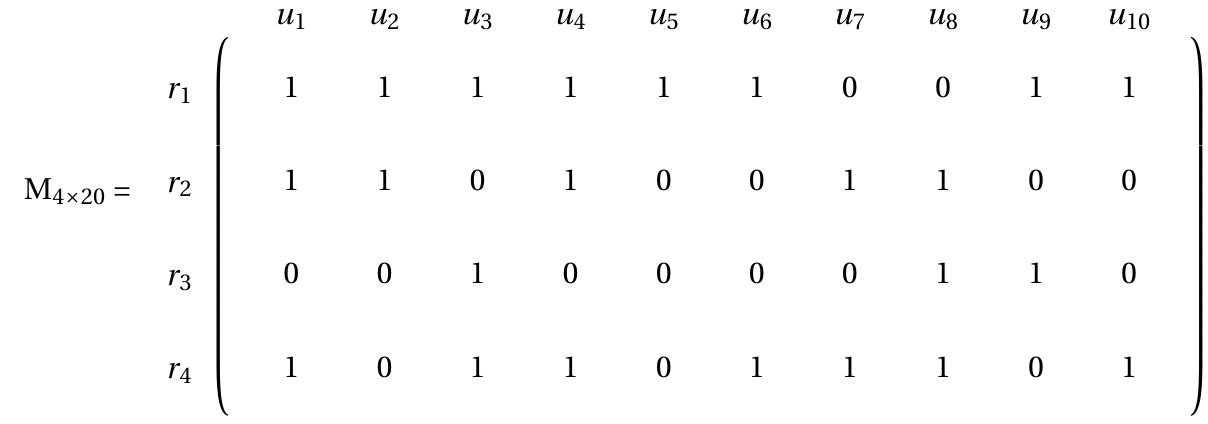}
	\end{center}
	\vspace{-0.5 em}
	\caption{A sample preference matrix $M_{4\times10}$}
	\label{fig_pm}
\end{figure}

\begin{exmp}
	\label{ex_eta}
	Consider the preference matrix $M_{4\times 10}$ of Figure~\ref{fig_pm}. $M_{4\times 10}$ contains $4$ rows and $10$ columns denoting $4$ requirements ($R=\{r_1,r_2,r_3,r_4\}$) and $10$ users ($U=\{u_1,...,u_{10}\}$) respectively. Each element $m_{i,j}$ of $M_{4\times 10}$ specifies whether a user $u_j$ prefers a requirement $r_i$ ($m_{i,j}=1$) or otherwise ($m_{i,j}=0$). For instance, $m_{4,2}=0$ specifies that the requirement $r_4$ is not preferred by the user $u_2$.
	
	Matrix $E_{4\times4}$ (Figure~\ref{fig_eta}) gives Pearl's measure of causal strength computed for pairs of requirements in the preference matrix $M_{4\times 10}$ of Figure~\ref{fig_pm} based on (\ref{Eq_Pearl}). An element $\eta_{i,j}$ of matrix $E_{4\times4}$ denotes the causal strength of an explicit value-related dependence from $r_i$ to $r_j$. For instance, we have $\eta_{1,3}=p(r_1|r_3)=\frac{p(r_1,r_3)}{p(r_3)}=\frac{0.2}{0.3}=0.6667$.
\end{exmp}

\begin{figure}[!htb]
	\begin{center}
		\hspace{-1.2em}\includegraphics[scale=0.45]{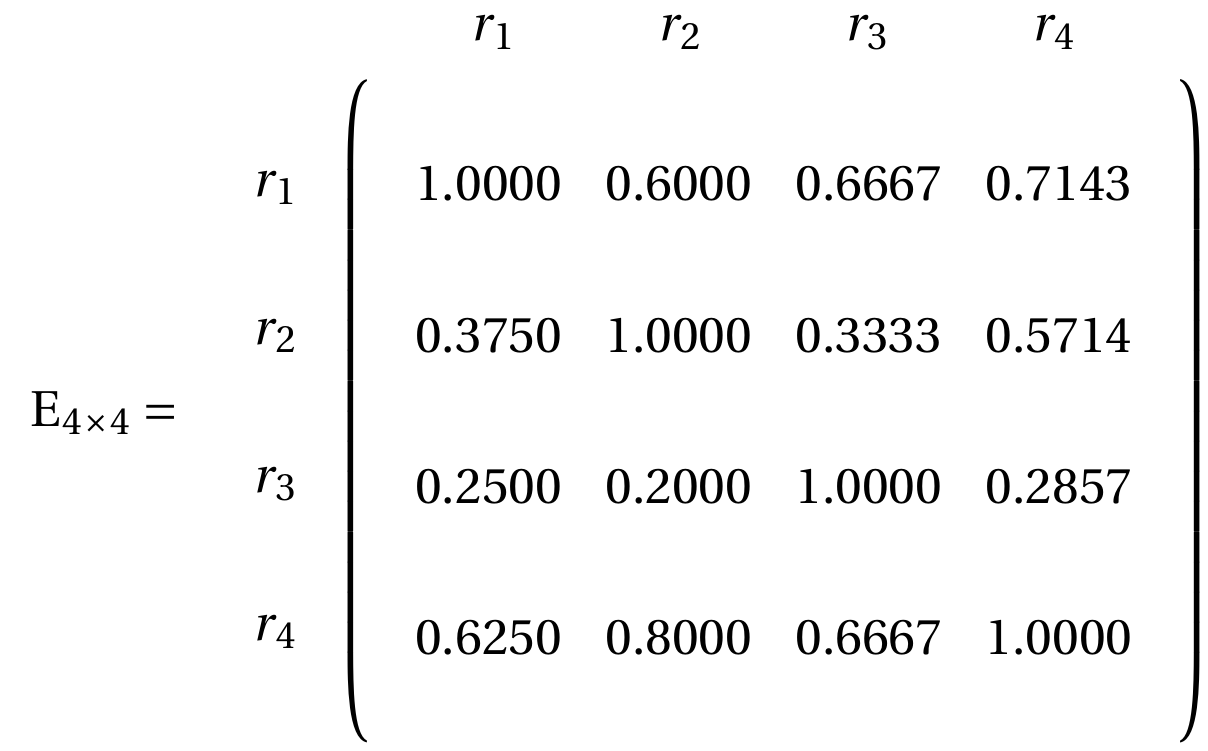}
	\end{center}
	\vspace{-1 em}
	\caption{Pearl's measure of causal strength computed for preference matrix $M_{4\times10}$ of Figure~\ref{fig_pm}}
	\label{fig_eta}
\end{figure}
\section{Conclusions and Future Work}
\label{sec_conclusion}

In this paper we focused on considering the impacts of requirement dependencies on the value of selected requirements (optimal set) during a requirement selection process. To achieve this, we made three main contributions as follows. 

First, we demonstrated using fuzzy graphs for modeling value-related dependencies among software requirements and capturing the strengths of those dependencies. 
Second, we presented a new measure of value referred to as the overall value that factors in the impacts of value-related requirement dependencies on the value of selected requirements (optimal set). 

Finally, we contributed a new requirement selection model referred to as the graph oriented requirement selection (GORS) model that maximizes the overall value of an optimal set by considering the impacts of value-related dependencies on the values of selected requirements. The GORS model not only considers the existence of value-related dependencies but more importantly factors in the strengths of those dependences during a selection process.   

Validity and practicality of our work are verified through a) carrying out several simulations and b) studying a real world software project. The results of our experiments show that: (a) the GORS model can properly capture the strengths of value-related dependencies during a requirement selection while mitigating the selection deficiency problem (SDP), (b) the GORS model always maximizes the overall value of selected requirements, and (c) maximizing the overall and the accumulated values of selected requirements can be conflicting objectives as maximizing one may depreciate the other.


One of the several avenues for extending the present work is to explore various techniques of dependency identification and measures of strength in order to improve the efficiency of automated identification of value-related requirement dependencies and capture various aspects of those dependencies in a software requirement selection process. 

Another possible extension is to also consider cost-related dependencies alongside value-related dependencies among requirements during a selection process. Finally, requirement selection is a NP-hard problem and considering value-related dependencies will add to this complexity. Hence, techniques to enhance scalability of requirement selection models while considering value-related dependencies would be beneficial to the software companies.   


\bibliographystyle{IEEEtran}
\bibliography{ref}

\vspace{4em}
\begin{appendices}	
    \vspace{1.5em}
	\section{Acronyms}
	The acronyms below are listed based on the order of first appearance in the paper.
	\vspace{-2.5em}
	\printglossary[type=abbreviation,style=abbreviationStyle]
	
	\section{Notations}
	A glossary of the frequently used symbols in this paper is given below.
	\vspace{-2.5em}
	\printglossary[type=notation,style=notationStyle]
\end{appendices}

\end{document}